\newtheorem{theorem}{Theorem}[section]
\theoremstyle{definition}
\newtheorem{defn}{Definition}[section]
\begin{document}

\title{Randomized Load-balanced Routing for Fat-tree Networks}

\author{Suzhen ~Wang, Jingjing Luo, Bruce Kwong-Bun Tong, and Wing~ Shing ~Wong,~\IEEEmembership{Fellow,~ IEEE}
\IEEEcompsocitemizethanks{\IEEEcompsocthanksitem  S. Wang,  J. Luo,  B. Tong, and W. S. Wong are with the Department
of Information Engineering, The Chinese University of Hong Kong, 
E-mail: ws012@ie.cuhk.edu.hk, wswong@ie.cuhk.edu.hk.
}
}


\IEEEtitleabstractindextext{
\begin{abstract}
Fat-tree networks have been widely adopted to High Performance Computing (HPC) clusters and to Data Center Networks (DCN). These parallel systems usually have a large number of servers and hosts, which generate large volumes of highly-volatile traffic. 
Thus, distributed load-balancing routing design becomes critical to achieve high bandwidth utilization, and low-latency packet delivery. 
Existing distributed designs rely on remote congestion feedbacks to address congestion, which add overheads to collect and react to network-wide congestion information. 
In contrast, we propose a simple but effective load-balancing scheme, called Dynamic Randomized load-Balancing (DRB), to achieve network-wide low levels of path collisions through local-link adjustment which is free of communications and cooperations between switches.
First, we use D-mod-k path selection scheme to allocate default paths to all source-destination (S-D) pairs in a fat-tree network, guaranteeing low levels of path collision over downlinks for any set of active S-D pairs.
Then, we propose Threshold-based Two-Choice (TTC) randomized technique to balance uplink traffic through local uplink adjustment at each switch.
We theoretically show that the proposed TTC for the uplink-load balancing in a fat-tree network have a similar performance as the two-choice technique in the area of randomized load balancing.
Simulation results show that DRB with TTC technique achieves a significant improvement over many randomized routing schemes for fat-tree networks. 
\end{abstract}


\begin{IEEEkeywords}
Fat-tree networks, dynamic load-balancing routing, threshold-based two-choice
\end{IEEEkeywords}
}

\IEEEdisplaynontitleabstractindextext
\IEEEpeerreviewmaketitle
\maketitle
\IEEEraisesectionheading{\section{Introduction}\label{sec:introduction}}
\IEEEPARstart{A}{} fat-tree is a folded version of a Clos network, and has numerous desirable features, such as scalability and high path diversity \cite{ math_ICN, fat-trees}.
Hence, fat-trees are widely adopted for applications ranging from Network-on-Chip \cite{Noc1, Noc2} (NoC), to High-Performance Computing (HPC) clusters \cite{changsha}, to Data Center Networks (DCN) \cite{al_scalable, facebook, portland, jupiter}. 
The high-path diversity of a fat-tree \textemdash there exist many alternative paths between a  Source-Destination (S-D) pair \textemdash is employed to realize load-balanced routing in several manners: centralized, fine-granularity, and congestion-aware.
Centralized load-balancing mechanisms \cite{hedra, openFlow} rely on a centralized control to make path decisions on the basis of a global view of the network.
However, these mechanisms usually face vital issues such as scalability and in-time response to congestion and traffic dynamics.
Fine-granularity mechanisms \cite{flare, presto}  \textemdash wherein a long flow is broken into smaller flows, which are then assigned to different paths \textemdash are less practical because they must modify protocols of the transportation layer. Congestion-aware mechanisms \cite{conga_2, global_conga, DiFS, ECN} \textemdash which aim to seek lightly loaded paths for the traffic based on congestion feedbacks \cite{DiFS, ECN} or on collected congestion information of all possible paths \cite{conga_2, global_conga} \textemdash inevitably add overhead and costs for network-wide information collection and storage.
In this paper, we propose a distributed local-adaptive scheme for fat-tree DCNs to achieve low-levels of path collision across all links through local-uplink adjustment on the basis of D-mod-k deterministic path selection scheme \cite{deterministic, oblivious_D_Mod_k, d_mod4} (to balance downlink traffic) and a proposed threshold-based two-choice randomized technique (to balance uplink traffic). 
The proposed scheme requires a minimum effort since it is distributed and is free of cooperation and communication between switches, and thus can be easily applied to systems such as NoC, HPC and DCN.

Routing in fat-tree can be partitioned into two stages: \textit{upward} (wherein traffic is forwarded upwards) and \textit{downward} (wherein traffic is forwarded downwards).
We refer to links transversed in the upward and downward stage as  \textit{uplinks} and \textit{downlinks} respectively.
The fat-tree topology determines that the upward routing is fully adaptive, whereas the downward routing is deterministic because each switch can select any of its uplinks to forward the traffic upwards and can use only one downlink to  forward the downward traffic to a specific destination. We shall elaborate on these topological routing properties in the next section.
Hence, it is easy to locally balance the uplink traffic but is impossible to locally balance the downlink traffic. 
In contrast, D-mod-k \cite{deterministic, oblivious_D_Mod_k, d_mod4}, which is a deterministic destination-based path selection scheme, yields desirable performance in the downward routing stage. 
To better understand D-mod-k, we mathematically explore the downward routing performance of D-mod-k and prove that it yields the lowest level of path collision over downlinks for any set of active S-D pairs.
However, traffic routed on D-mod-k paths may suffer heavy uplink congestion.

Two-choice technique \cite{TwoChoices, balanced_allocations}  \textemdash a simple but powerful randomized technique for balanced allocation \textemdash can be employed to balance uplink traffic at each switch. 
The power of two-choice technique is described as follows.
Sequentially throw $n$ balls into $n$ bins (assume $n$ is large),  each ball can be thrown into a randomly selected bin (one-choice), or it can be thrown into a less-loaded bin of two randomly selected bins (two-choice). 
At the end of the process, with high probability the maximum load of a bin is $\Theta(\frac{\ln n}{\ln \ln n})$ under the one-choice case and is about $\Theta(\frac{\ln \ln n}{\ln 2})$ under the two-choice case.
This result reveals that two-choice achieves an exponential improvement over one-choice. If d ($\ge 2$) choices are provided for each ball, then the maximum load of a bin is $\Theta(\frac{\ln \ln n}{\ln d})$, just a fractional improvement over two-choice. 
Yu and Deng \cite{openFlow} proposed a scheme that applied the multiple-choice technique to randomized path allocation; the controller assigns each newly-activated S-D pair with the least-congested path among several randomly selected paths. However, Richa at al. \cite{power_two_survey} (see Section 3.4) pointed out that an exponential improvement on the maximum link load cannot be achieved by the path-level multiple-choice scheme.
In contrast, Ghorbani et al. \cite{micro}  proposed a link-level two-choice scheme, called Micro, for dynamic uplink allocation, where each switch always directs newly arrival traffic to the less-congested uplink out of two randomly selected ones. Simulations in \cite{micro} showed that Micro outperforms Conga \cite{conga_2} (a popular congestion-aware load-balancing routing scheme) and Presto \cite{presto} (a fine-granularity load-balancing mechanism). 
However, Micro cannot address downlink congestion because downward routing in fat-tree networks is deterministic. 


Based on the above observations, we propose a Dynamic Randomized load-Balancing (DRB) scheme which first sets paths determined by D-mod-k as default paths and then attempts to balance uplink traffic through local-uplink adjustment. D-mod-k paths guarantee low levels of downlink congestion, whereas local-uplink adjustment contributes to alleviating uplink congestion.
We observe that local-uplink adjustment based directly on the two-choice technique causes too much redirected traffic, which may induce heavy downlink congestion because redirected traffic is no longer routed on D-mod-k paths. To address this, we attempt to limit redirection operation by introducing a threshold to the two-choice technique, which turns out to be effective and achieves a substantial improvement over direction application of two-choice.
We refer to the two-choice technique with a threshold as \textit{ Threshold-based Two-Choice} (TTC). 
It is known through theoretical analysis that the two-choice technique well balances load among different queues in the supermarket model \cite{balanced_allocations, TwoChoices}. 
However, it is unclear and is of interest to investigate that whether or not the introduced threshold will destroy the effectiveness of the two-choice technique.
A theoretical contribution of this paper is that we prove that the power of TTC is comparable to that of the two-choice technique --- they both achieve doubly  exponential queue-size distribution at the equilibrium state in the supermarket model.
Through numerical experiments, we show that  DRB achieves the best performance with a substantially improvement observed especially in heavy-traffic scenarios when compared with many other routing schemes such as  Micro \cite{micro},  Valiant Load Balancing (VLB) \cite{valiant}, and D-mod-k.

The rest of the paper is organized as follows:
In section 2, we  introduce the fat-tree topology and its topological routing properties, and prove the downward routing properties of  D-mod-k.
In section 3, we describe DRB and TTC as well as two typical applications of DRB. 
In section 4, we present the numerical experiment results of two typical applications. 
In section 5, we theoretically analyze the impact of the introduced threshold to the two-choice technique. 
In section 6, we conclude the paper.

\section{ Fat-tree Topology and its Routing properties }
In this paper, we consider a fat-tree topology that has been widely applied to data center network. 
A fat-tree topology has multiple switches layers and one host layer  (see Fig. \ref{tree1}).
We use the notation $\mathcal{F}(\ell, d)$ to indicate a fat-tree with $\ell$ layers of identical 2d-port switches.
We refer to switches at the top-most layer as \textit{core} switches, and refer to switches at the rest layers as \textit{intermediate} switches. 
In a fat-tree $\mathcal{F}(\ell, d)$, there are $2d^\ell$ hosts,  $d^{\ell -1}$ core switches, and $2d^{\ell-1}$ intermediate switches at each layer. 
We label those hosts from $0$ to $2d^\ell-1$, those intermediate switches at each layer from $0$ to $2d^{\ell-1} -1$, and those core switches from $0$ to $d^{\ell-1} -1$.

We partition switch ports into two classes: \textit{up-ports} and \textit{down-ports}.
Up-ports are those ports that are used to connect to switches at a higher layer; down-ports are those ports that are used to connect to switches or hosts at a lower layer. In view of this, core switches have only down-ports.
Each port can send and receive traffic at the same time. 
From left to right, we label down-ports of a core switch from $0$ to $\small 2d-1$, and label up-ports and down-ports of an intermediate switch from $0$ to $\small d-1$ respectively. 
In Fig. \ref{tree2}, we highlight  up-port and down-port labels of each switch in red and in black. 
In what follows, we indicate a port of each switch by its label and denote by $p_{dn}^m$ and $p_{up}^m$ a down-port and an up-port at layer $m$ respectively.

\begin{figure}[htbp]
\centering\includegraphics[width= 3.6 in]{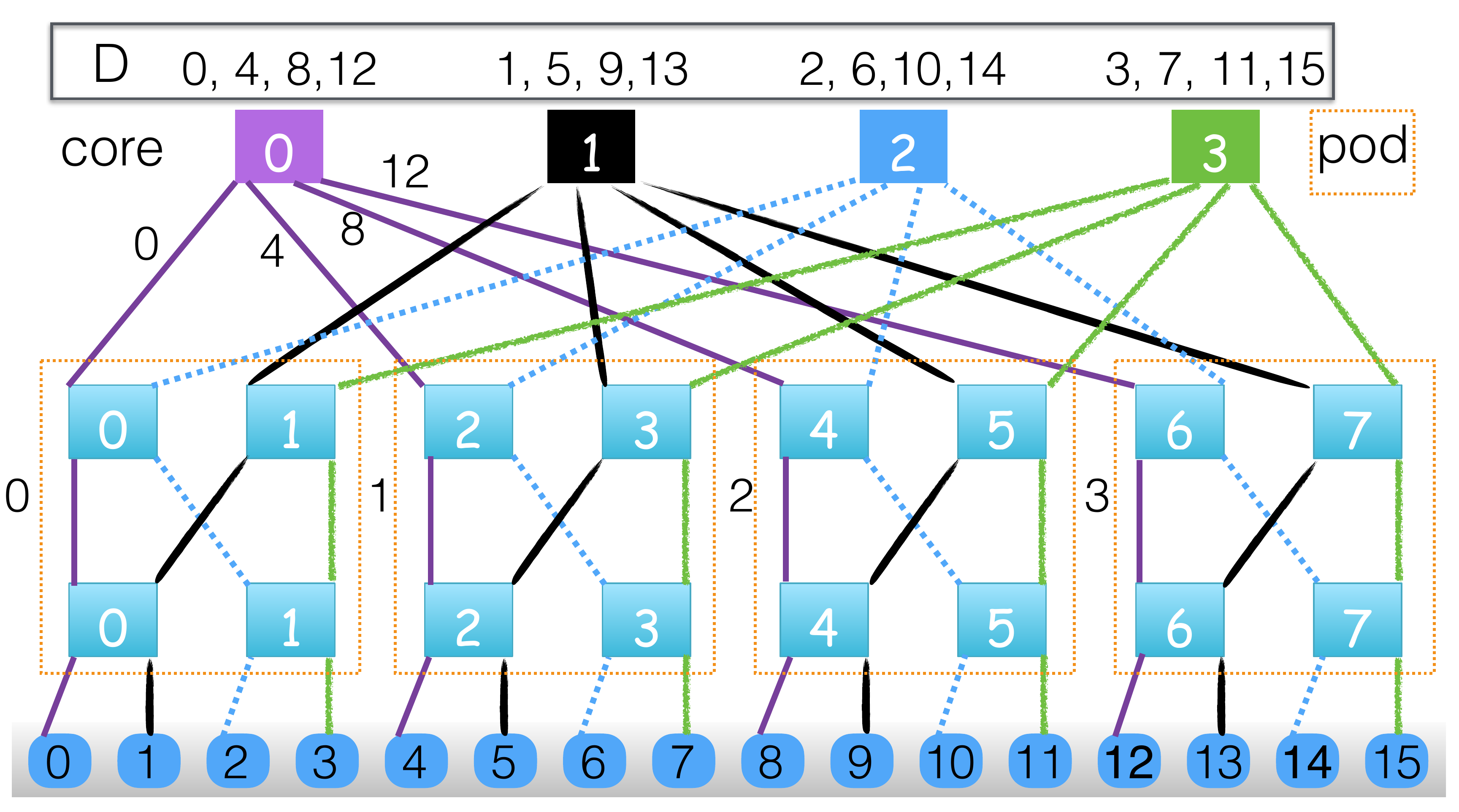} 
\caption{ This figure presents the topology of a fat Tree $\mathcal{F}(3, 2)$, where the bottom-most layer is the host layer, the rests are switch layers. }
\label{tree1} 
\end{figure}

Previous researchers have used integer vectors to encode labels of switches and hosts \cite{karyntrees, oblivious_sigmetrics, DCF} on the basis of the fat-tree topology. These integer vectors in turn characterize the fat-tree topology and define the way that switches and hosts are physically linked.
Given a host whose label is $l_h$, we use an integer  vector $ h=(h_\ell, ..., h_{1}) \in  \mathbb{Z}_{2d} \times \mathbb{Z}^{\ell-1}_d $ to encode this label, which satisfies $l_h = \sum_{i=1}^{\ell} h_i d^{i-1}$.
We index switch layers from bottom to top with numbers ranging from 1 to $\ell$.
We use an integer vector $s^m = (s^m_{\ell-1}, ..., s^m_{1})\in \mathbb{Z}_{2d} \times \mathbb{Z}^{\ell-2}_d$ to encode the label of an intermediate switch at layer $m$ ($m\in [1, \ell-1]$), and use $s^\ell = (s^\ell_{\ell-1}, ..., s^\ell_1)\in \mathbb{Z}^{\ell-1}_d$ to encode the label of a core switch. It has $ l^m_s = \sum_{i=1}^{\ell-1} s^m_i d^{i-1}$ for all $m\in [1, \ell]$, where $l^m_s$ indicates the label of a switch at layer $m$.
We refer to these integer vectors as \textit{label codes}. 
Fig. \ref{tree1} and Fig. \ref{tree2} present labels and label codes for hosts and switches in a $\mathcal{F}(3, 2)$ respectively.
In what follows, we indicate a host or a switch either by its label or its label code.

Direction connection of elements in $\mathcal{F}(\ell, d)$ can be formally defined by label codes. 
Consider switches at the 1st layer. Down-ports of these switches are connected directly to hosts. 
More specifically, a down-port $p_{dn}^1$ of a switch $s^1$ is connected to a host $h$ if $ h = (p_{dn}^1, s^1)$.
Two switches in layers $m-1$ and $m$ $(2 \le m\le \ell)$ respectively are connected directly to each other if their label codes satisfy 
\begin{equation}
\label{switch_connection}
\small
\begin{cases}
s^{m-1}_{-(m-1)} = s^{m}_{-(m-1)}, &\\
p_{up}^{m-1}= s^m_{m-1}, p_{dn}^{m} =  s^{m-1}_{m-1}, &
 \end{cases}
\end{equation}
where $p_{up}^{m-1}$ and $p_{dn}^m$ indicate the up-port label of switch $s^{m-1}$ and the down-port of switch $s^{m}$ respectively. Those two ports are designated to establish the above connection.
Let $x = (x_1, ..., x_n)$, $x_{-i}$ indicates a new vector derived by eliminating the $i$-th component of $x$, that is $x_{-i} = (x_1, ..., x_{i-1}, x_{i+1}, ..., x_n)$ with $i \in [1, n]$. 
Take $\mathcal{F}(2, 3)$ in Fig. \ref{tree2} for example. The two connected switches $s^1 = (0, 0)$ and $ s^2 = ( 0, 1)$ satisfy $s^1_{2} =s^{2}_2 = 0$, where $s^1$ uses up-port with label $1$ and $s^2$ uses down-port with label $0$ to establish the connection.

\begin{figure}[htbp]
\centering\includegraphics[width= 3.6 in]{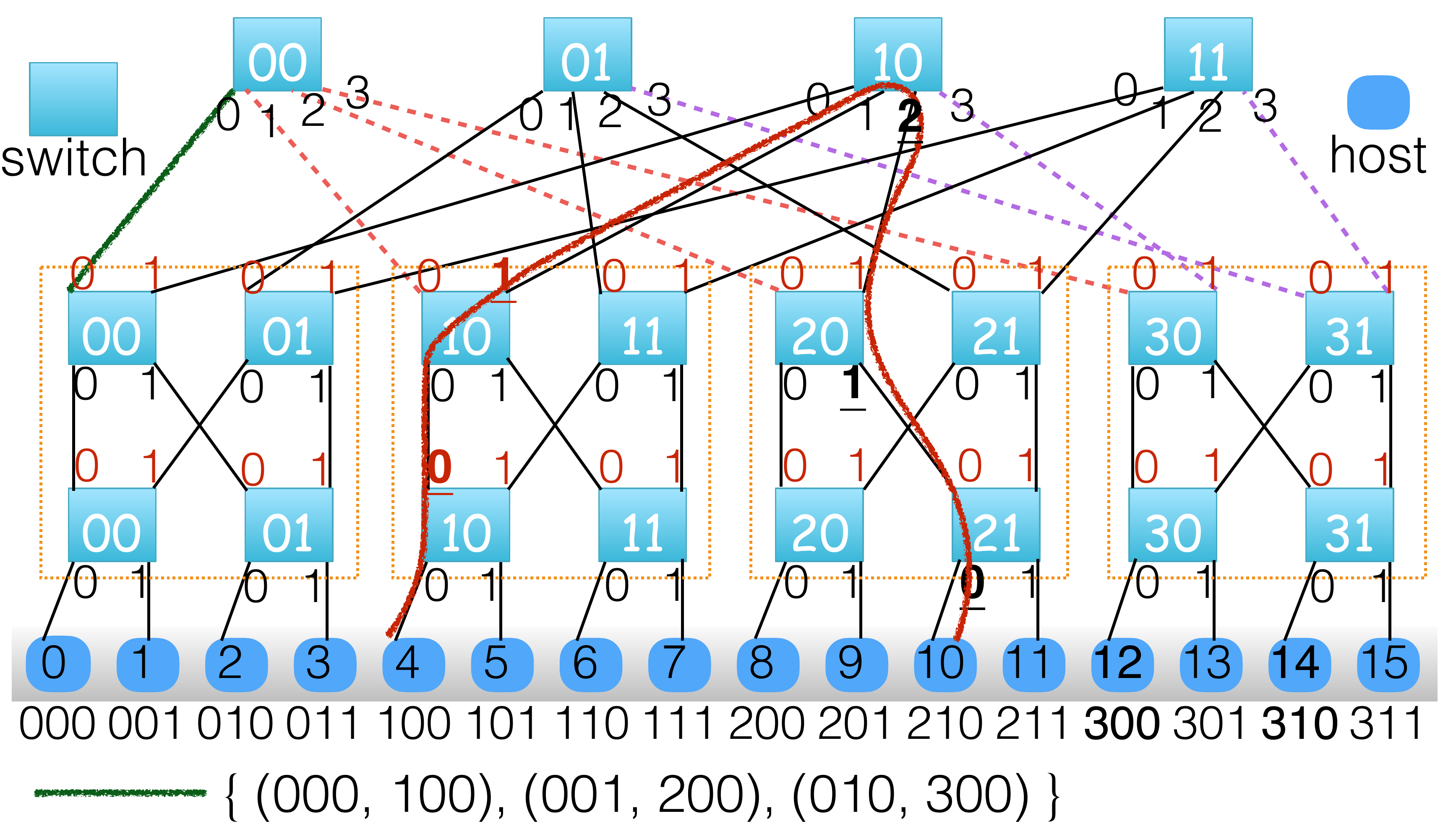} 
\caption{ This figure shows label codes of hosts and switches in a fat-tree $\mathcal{F}(3, 2)$. The labels of up- and down- ports of each switch, which are indicated in red and in black respectively. }
\label{tree2} 
\end{figure}


As mentioned before, routing in a fat-tree proceeds in two stages: upward and downward routing.
We refer to a switch where the routing direction is changed from upwards to downwards as a \textit{transition} switch. 
We also refer to the paths from the source hosts to the transition switches as \textit{upward paths} and the paths from the transition switches to the destination hosts as \textit{downward paths}. 
We denote by $(h^s, h^d)$ an S-D pair with $h^s$ and $h^d $ being its source and destination  respectively.
Define distance between $h^s$ and $h^d$ by
\begin{math}
D(h^s, h^d) = \max_i \{ i\mid h^s_i  \neq h^d_i\}. 
\end{math}  
If the distance of two hosts is $k$, then traffic of this pair needs to transverse at least $k-1$ layers to reach a transition switch at layer $k$ under the shortest path routing principle. 

Given an S-D pair, any of its upward paths can be defined by a sequence of up-ports it transverses; any of its downward paths can be defined by a sequence of down-ports it transverses. 
Let $P_{up}(h^s, h^d, k)$ and $P_{dn}(h^s, h^d, k)$ denote an upward and a downward path of pair $(h^s, h^d)$ with distance $k$, then we have
\begin{equation}
\label{path_definition}
\begin{cases}
P_{up}(h^s, h^d, k) = (p_{up}^1,..., p_{up}^{k-1}),&\\
P_{dn}(h^s, h^d, k) = (p_{dn}^{k}, ..., p_{dn}^1). &
\end{cases}
\end{equation}
For example, the upward path and the downward path of the highlighted path in Fig. \ref{tree2} from host 4 to host 10 are represented as follows.
\begin{displaymath}
\begin{cases}
P_{up} \left((1, 0, 0), (2, 1, 0), 3\right)  = (0, 1), &\\
P_{dn} \left((1, 0, 0), (2, 1, 0), 3\right) = (2, 1, 0).
\end{cases}
\end{displaymath}

An essential routing property of the fat-tree network is that the upward routing is fully adaptive, whereas the downward routing is deterministic. A mathematical explanation of the property is provided as follows. 
Every sequence in $\mathbb{Z}^{k-1}_d$ $(k\in [1, \ell] )$ uniquely determines an upward path for any two pairs with distance $k$. Once a transition switch is reached, the downward path is then uniquely determined by $h^d$ in the form of 
\begin{equation}
\label{determinisitc_dwn}
P_{dn}(h^s, h^d, k)= (h^d_{k}, ..., h^d_1).
\end{equation}
We provide a proof for result in equation (\ref{determinisitc_dwn}) in Appendix A.

\subsection{ D-mod-k Path Selection Scheme}
D-mod-k is a desirable upward path selection scheme, which selects an upward path in the following manner.
\begin{displaymath}
P_{up}(h^s, h^d, k) = (h^d_1,..., h^d_{k-1}), 
\end{displaymath}
which guarantees that traffic to different destinations is routed onto link-disjoint downward paths.
In fact, D-mod-k establishes a mapping from destinations of pairs to transition switches such that  downward paths for pairs with different destinations are link-disjoint. 
For example, different core switches in Fig. \ref{tree1} are mapped by D-mod-k to disjoint sets of destinations (e.g., core switch 0 is mapped to  destinations $\{0, 4, 8, 12\}$). 
Then, as one can check with Fig. \ref{tree1}, D-mod-k downward paths from different core switches to different destinations are link disjoint. 
We summarize the D-mod-k routing property with rigorous proof in the the following theorem. It worths mentioning that we are the first to provide a mathematical verification for the downward routing property of D-mod-k.

\begin{theorem}
\label{thereom_dcf2}
D-mod-k guarantees link-disjoint downward paths for pairs with different destinations in a fat-tree network.
\end{theorem}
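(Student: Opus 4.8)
The plan is to show that, under D-mod-k, the whole downward path of a pair --- both the sequence of switches it traverses and the down-port it uses at each of them --- is determined by the destination $h^d$ (together with the distance $k$) and is completely independent of the source, and then to exhibit an explicit bijection between destinations and the downlinks their paths occupy. Throughout I would identify a downlink at layer $m$ with the pair $(s^m, p_{dn}^m)$ consisting of the layer-$m$ switch and the down-port it uses, so that two downward paths share a physical link if and only if they agree on such a pair at a common layer $m$. Disjointness at different layers is automatic, so everything reduces to the layer-by-layer comparison.

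First I would locate the transition switch. Under D-mod-k the upward path of a pair with distance $k$ uses up-ports $(h^d_1,\ldots,h^d_{k-1})$, so applying the connection rule of equation~(\ref{switch_connection}) layer by layer, the $j$-th coordinate of the layer-$k$ switch is set to $h^d_j$ for $j=1,\ldots,k-1$, while its remaining coordinates are inherited from the source's layer-1 switch $\sigma^1 = h^s_{-1} = (h^s_\ell,\ldots,h^s_2)$. Here I would invoke the distance hypothesis $D(h^s,h^d)=k$, which forces $h^s_i = h^d_i$ for every $i>k$, to rewrite those source-inherited coordinates in terms of the destination. This gives the clean identity that the transition switch equals $h^d_{-k}$, the destination code with its $k$-th component deleted; in particular it does not depend on $h^s$ at all.

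Next I would propagate this downward. By equation~(\ref{determinisitc_dwn}) the down-ports along the descent are $(h^d_k,\ldots,h^d_1)$, and reading the connection rule in the downward direction, each step fixes exactly one coordinate of the lower switch to the down-port just used while preserving the others. A short induction then yields that the switch visited at layer $m$ is precisely $h^d_{-m}$ and the down-port used there is $h^d_m$, for every $m=1,\ldots,k$. The decisive observation is that the pair $(h^d_{-m},\, h^d_m)$ reconstructs $h^d$ in full --- the first entry supplies all coordinates except the $m$-th, and the second supplies the $m$-th --- so for each fixed $m$ the map sending a destination to the downlink its path uses at layer $m$ is injective.

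Finally I would conclude by contraposition. If the downward paths of two pairs with destinations $h^{d_1}$ and $h^{d_2}$ shared a link, they would share it at some common layer $m$, giving $(h^{d_1}_{-m}, h^{d_1}_m) = (h^{d_2}_{-m}, h^{d_2}_m)$ and hence $h^{d_1} = h^{d_2}$; so distinct destinations force link-disjoint downward paths. I expect the only delicate part to be the bookkeeping in the second paragraph: tracking which coordinate each up- or down-port fixes under equation~(\ref{switch_connection}), and, above all, applying the distance hypothesis at exactly the right moment to convert the source-inherited coordinates of the transition switch into destination coordinates. Once the identity ``layer-$m$ switch $=h^d_{-m}$, down-port $=h^d_m$'' is in hand, the disjointness claim is immediate.
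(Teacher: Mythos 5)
Your proof is correct and takes essentially the same approach as the paper's: both arguments establish that under D-mod-k the switch visited at layer $m$ of the downward path has label code $h^d_{-m}$ and the down-port used there is $h^d_m$, so that the (switch, down-port) pair at any shared layer reconstructs the destination, and both conclude by contraposition that distinct destinations force link-disjoint downward paths. The only cosmetic difference is that you re-derive this identity by induction on the connection rule starting from the transition switch, whereas the paper obtains it by specializing the general path-switch formulas of its Appendix A to the D-mod-k up-port choices.
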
 
The proof is provided in Appendix B. 
An interpretation of this theorem is as follows.
Given an arbitrary set of S-D pairs, we classify those pairs into different categories based on their destinations where pairs with the same destinations are classified into the same categories and pairs with different categories are classified into different categories.
According to Theorem \ref{thereom_dcf2}, D-mod-k downward paths for any two pairs from different categories are link-disjoint; D-mod-k downward paths for pairs from the same category use common downlinks to reach the same destination. 
Observe that paths of pairs in the same category must collide in some downlink of 1st-layer switches.  
We thus claim that  D-mod-k minimizes the level of downward path collision.

Nevertheless, D-mod-k may perform poor in the upward routing stage. 
Routing in the upward stage under D-mod-k is similar to any deterministic routing in a Butterfly network. 
It also has been shown that there exist certain permutation patterns (wherein each host is a source and a destination of at most one S-D pair) such that about $\sqrt{N}$ pairs use the same links under any deterministic routing scheme for fat-tree networks \cite{oblivious_sigmetrics}. This result implies that D-mod-k routing may induce heavy uplink congestion. 
To address this, we introduce randomized uplink-load balancing scheme in the next section.

\section{Threshold-based Two-choice Scheme}

\begin{figure}
\centering 
\subfigure[The scheduler sends the traffic to the 1st choice since the load difference between the two choices does not exceed the threshold $T (c)= 1$. ] { \label{fig:a} 
\includegraphics[width=0.8\columnwidth]{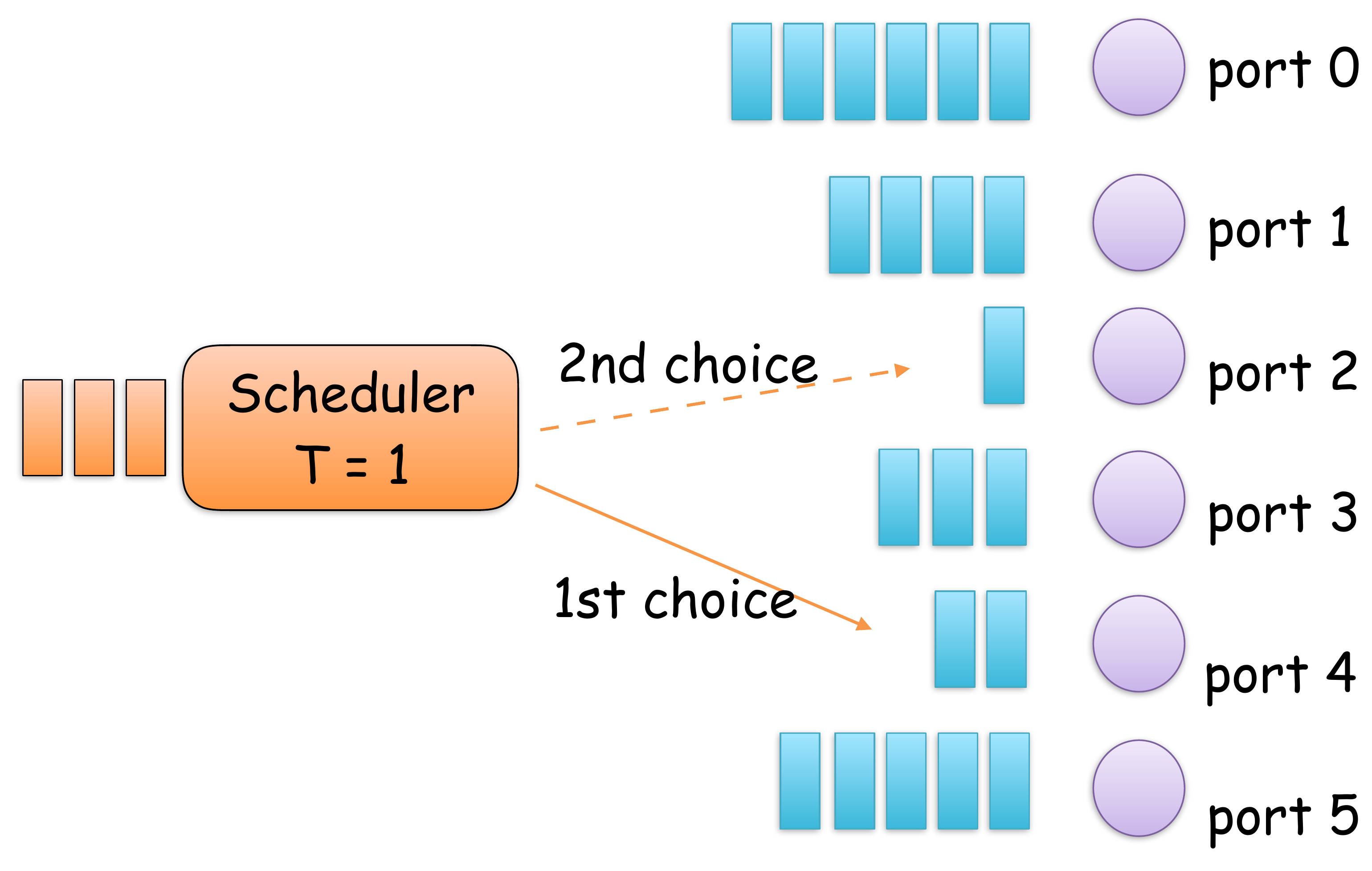} 
} 
\subfigure[The scheduler sends the traffic to the 2nd choice since the 2nd choice is less loaded with the load difference larger than T (c).] { \label{fig:b} 
\includegraphics[width=0.8\columnwidth]{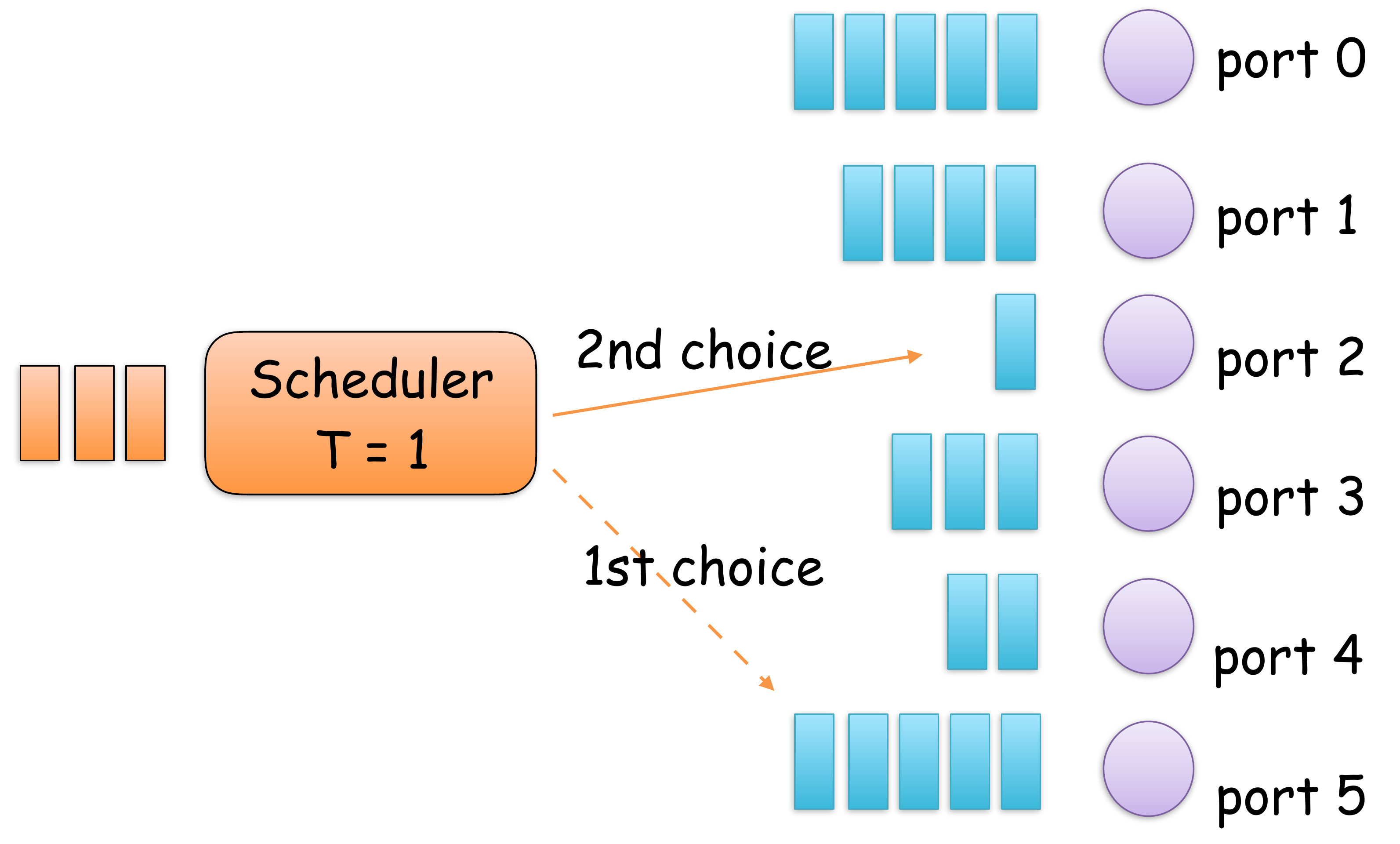} 
} 
\caption{The proposed threshold-based two-choice technique for select one queue out of two.} 
\label{fig:test} 
\end{figure}
We first set D-mod-k paths as default paths for all S-D pairs, then locally and dynamically adjust upward paths of congested S-D pairs through the proposed TTC technique, which is executed independently by each switch without communications and cooperations with other switches.
Because of the fully adaptive upward routing property of fat-tree networks, each switch can adopt the two-choice technique to balance its uplink traffic in the following way.
Upon the arrival of new traffic, a switch first compares the load of the D-mod-k uplink and another randomly selected uplink. If the random choice is less loaded, this switch then redirects the traffic to the randomly selected uplink. 
The traffic is referred to as flows in flow routing and as packets in packet routing. 
The load of a link is measured by the number of flows using this links or by its queue length of packets.

 \begin{algorithm}[!htpb]
 \caption{ Pseudo code of DRB routing scheme.} 
 \KwIn{$h^s, h^d$, T} 
 \KwOut{$P_{up}(h^s, h^d)$} 
 Calculate  $k = \max_i \{ i\mid h^s_i  \neq h^d_i\}$\;
 \If{k ==1}
 { Tag =0\;
 }
 \Else{ Tag = 1\;}
 \If{Tag ==1}
 {
    \For{$i=1; i \le k-1; i$++} 
    { select a random uplink $r_i$\;
      \If{$l(r_i)< l(h^d_i) -T$}{
        $p_{up}^i  =r_i$ \; }
      \Else{
           $p_{up}^i  =h^d_i$\;
           }
        }
 Tag = 0\; 
 }
 \If{Tag == 0}
 {
     \For{$i=k; i \ge 1; i$--} 
    { $p_{dn} = h^d_i$\;
     }
 }
 return $P_{up}(h^s, h^d)$\; 
 \end{algorithm}

Direct application of two-choice, however, usually induces a large amount of redirected traffic, which will significantly increase the level of path collision in the downward stage. To see this, we assume that the load distribution of each uplink of a switch is i.i.d.. Then, we  infer that switch routes traffic to D-mod-k uplinks at each layer with nearly 0.5 probability. Hence, if the traffic needs to transverse $k-1$ layers, then the probability that the traffic remains to be routed to the predetermined D-mod-k path is about  $0.5^{k-1}$.
For example, in a typical three-layer fat-tree network (widely deployed for DCNs), only about 25 percent of traffic is routed to D-mod-k paths. In view of this, the low-level of downward path collision of D-mod-k cannot be guaranteed.
We shall numerically show later that direct incorporation of the two-choice technique into D-mod-k scheme fails to improve the routing performance over D-mod-k.

To address this problem, we propose the TTC technique which introduces a threshold to the two-choice technique, aiming to reduce the redirected traffic to a more reasonable amount.
The description of this technique is as follows.
Set a threshold $T$. Upon the arrival of new traffic destined for $h^d$, a switch at layer $i$ first checks the load of the D-mod-k uplink $h^d_i$, denoted by $l(h^d_i)$, and the load of another randomly selected uplink $r_i$, denoted by $l(r_i)$. The traffic is redirected to  uplink $r_i$ if and only if  $l(r_i) < l(h^d_i) -T$; otherwise, it is routed to the D-mod-k uplink $h^d_i$. 
( Fig. \ref{fig:test} illustrates the TTC technique which is executed independently at every switch in the upward routing stage.)  Algorithm 1 presents the pseudo code of DRB with TTC. 

Intuitively, an appropriate value setting of the threshold is critical for DRB to yield performance improvement. High values usually lead to fewer traffic redirections but are less effective in evenly distributing upward traffic among uplinks. Low values lead to frequent traffic redirections that may high levels of downlink congestion.
We shall numerically show that an appropriate setting of the threshold can yield a significant performance improvement over the case where the threshold is set to be 0.

To gain an in-depth understanding of the TTC technique in load balancing, we conduct  theoretical analysis on the impact of a threshold introduced to the two-choice technique in Section 5, and show that the introduced threshold does not weaken the power of the two-choice technique in load balancing. 

Permutation is an important communication pattern for numerical evaluation and theoretical analysis on the performance of a routing scheme \cite{oblivious_sigmetrics, oblivious_D_Mod_k}.  
In the next section, we design numerical experiments to evaluate the effectiveness of DRB in both flow and packet routing models with permutation flows and permutation packets respectively.
For the flow-routing model,  DRB is applied to dynamic path assignment for each flow.
For the packet-routing model, DRB is applied to allocate uplinks and downlinks for each packet.

\section{Numerical Experiments}
We design the numerical experiments with C++ for two fat-tree networks $\mathcal{F}(3, 24)$ and $\mathcal{F}(4, 12)$.
The experiments compare the performance of DRB with that of VLB, Micro, and D-mod-k under the flow and packet routing settings.
For clarity, we provide a brief explanation on routing schemes: Micro and VLB. 

\textbf{Micro}: Upon arrival of new upward traffic, each switch compares the load of two uplinks that are selected at random, and directs the traffic to the less-loaded uplinks. 

\textbf{VLB}: Valiant's load balancing (also known as Valiant's trick) was first proposed to substantially improve the worst-case performance of permutation routing in Hypercube networks \cite{valiant}. This trick can be naturally applied to fat-tree networks with the same function of improving the worst-case performance of permutation routing because of the its randomness in selecting a path for each pair at each time step. The way to apply VLB routing to a fat-tree network can be as follows. At each time, given a pair $(h^s, h^d)$ with distance $k$, the upward path selected by VLB is 
\begin{displaymath}
P_{up}(h^s, h^d, k) = (r_1,..., r_{k-1}),
\end{displaymath}
where $(r_1, ..., r_{k-1})$ is a randomly picked sequence from the set $\mathbb{Z}^{k-1}_{d}$.

\subsection{Permutation-flow Routing}
To test the effectiveness of the propose scheme in the flow model, we use $c$-permutation flows wherein a host sends exactly $c$ flows to other hosts and receives exactly $c$ flows from other hosts.
Next, we run DRB for flow-path assignment, where each switch sequentially assigns an uplink to every incoming flow using TTC. After all flows are assigned with paths, we then calculate the maximum link load. 
The maximum link load is referred to the maximum number of flows transversing a single link over all links in a fat-tree network including both uplinks and downlinks. The maximum link load is equal to the maximum level of path collision and thus reflects the worst-case performance of a routing scheme.

\begin{figure}
\centering
\subfigure{
\begin{minipage}[b]{0.5\textwidth}
\includegraphics[width=0.9\textwidth]{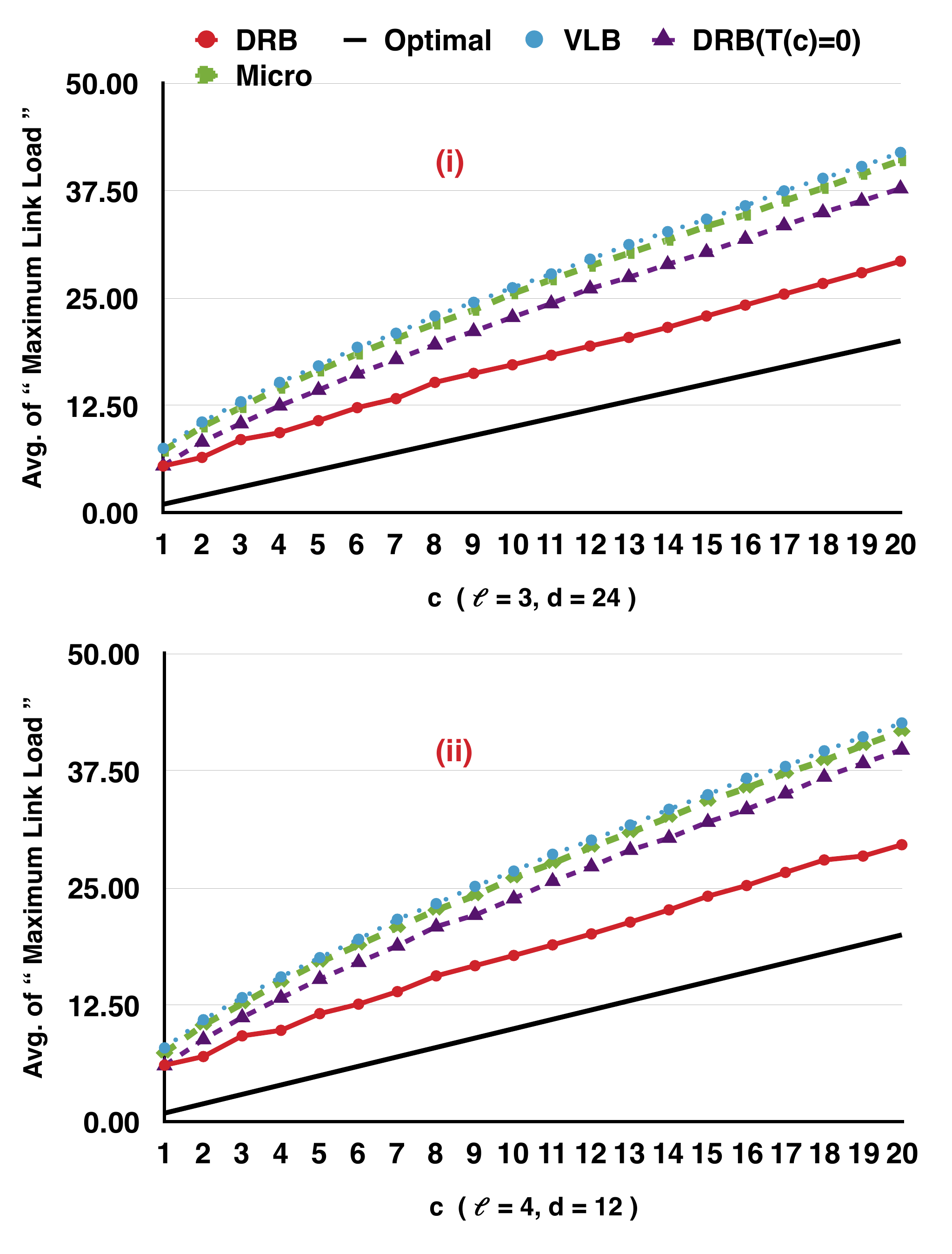}
\end{minipage}
}
\caption{The average of the maximum link load  in  $\mathcal{F}(3, 24)$ and in $\mathcal{F}(4, 12)$ respectively.}
\label{flow_model}
\end{figure}

Specifically, we set the threshold of DRB as follows.
\begin{equation}
\label{c_two}
\begin{cases}
T(c) = \lceil \frac{c}{2}\rceil, & 1\le c <  \ln N,\\
T(c) = \lfloor \frac{\ln N}{2}\rfloor & c > \ln N,
\end{cases}
\end{equation}
where $N$ is the number of hosts in a fat-tree network, which is also equal to the number of uplinks or downlinks at each layer of a fat-tree network.
We run the experiment for 200 times for each value of $c$ ranging from 1 to $20$. We then do average of the maximum link load over 200 experimental results, and present the results in Fig. \ref{flow_model} and in Table 1. It worths mentioning that the variance of the maximum link load over the 200 experiments is small.
Observe that fat-tree is nonblocking to permutation flow, therefore, the maximum link load of $c$-permutation flows under an optimal path allocation should be $c$.
The black line in Fig. \ref{flow_model} reflects the maximum link load under an optimal path allocation. 
 Fig. \ref{flow_model} shows DRB with threshold $(\ref{c_two})$ achieves the best flow-path allocation with a significant performance improvement over Micro, VLB, and  DRB with $T(c) = 0$.

Specifically, we provide a heuristic estimation on the maximum link load as follows.
\begin{equation}
\label{flow_congestion}
c + \frac{\ln \ln N }{\ln 2} + T(c), c\ge 1.
\end{equation}
This estimation is proposed based on two observations.
The first observation is as follows. The derived result in \cite{heavy_load} says that, at the end of sequentially throwing $cN$ balls into $N$  bins with the two-choice technique, the maximum load of a bin is about  
\begin{equation}
\label{max_load}
c + \frac{\ln \ln N}{\ln 2} +\Theta(1),\  c\ge 1
\end{equation}
with high probability, assuming $c\le \Theta(\ln N)$.
The second observation is that, different from the classical two-choice technique, the proposed TTC scheme can not further reduce  load-difference between any two choices whose load difference is already less than or equal to $T(c)$. Therefore, we heuristically adds the extra term $T(c)$ to equation $(\ref{max_load})$ and get the final heuristic estimation. 


\begin{figure}[htbp]
\begin{center}
\includegraphics[width = 3.4 in]{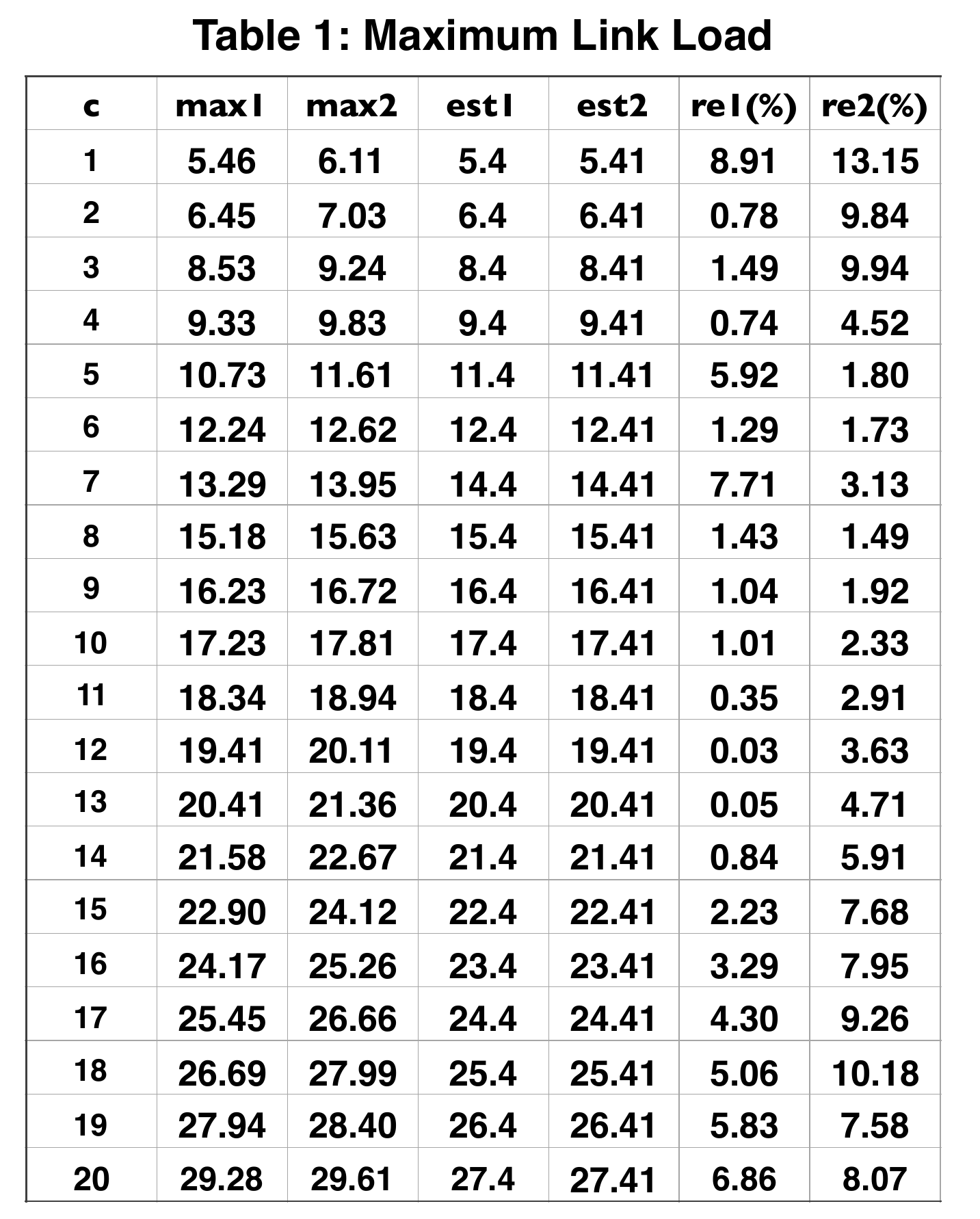}
\end{center}
\end{figure}

\begin{figure}
\centering 
\subfigure[ Average  (i)  and maximum (ii) queue lengths  of up and down links at different layers in $\mathcal{F}(3, 24)$. ] { \label{fig:1} 
\includegraphics[width=1\columnwidth]{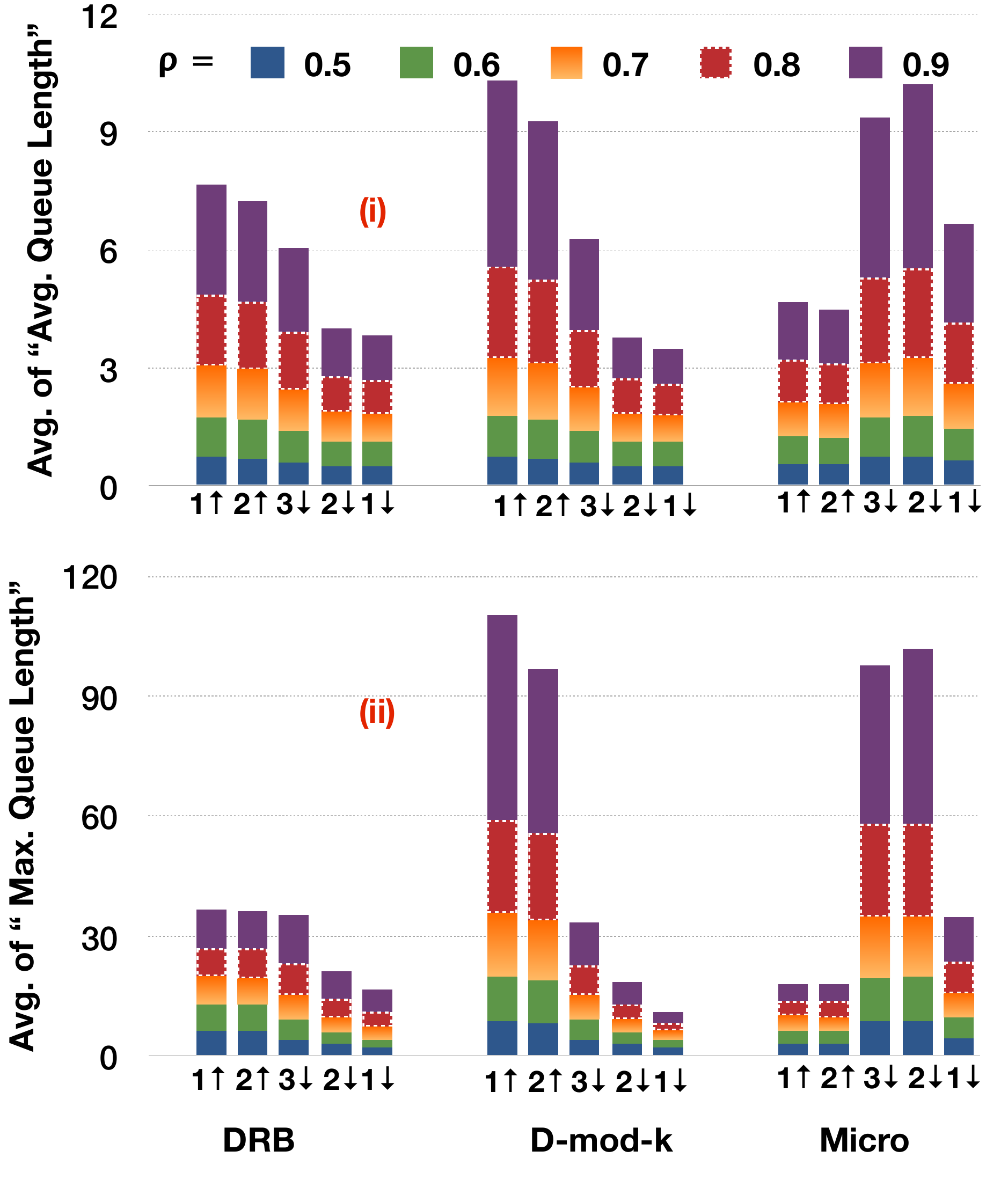} 
} 
\subfigure[Average (iii) and maximum (iv) queue lengths of up and down links at different layers in $\mathcal{F}(4, 12)$.] { \label{fig:2} 
\includegraphics[width=1\columnwidth]{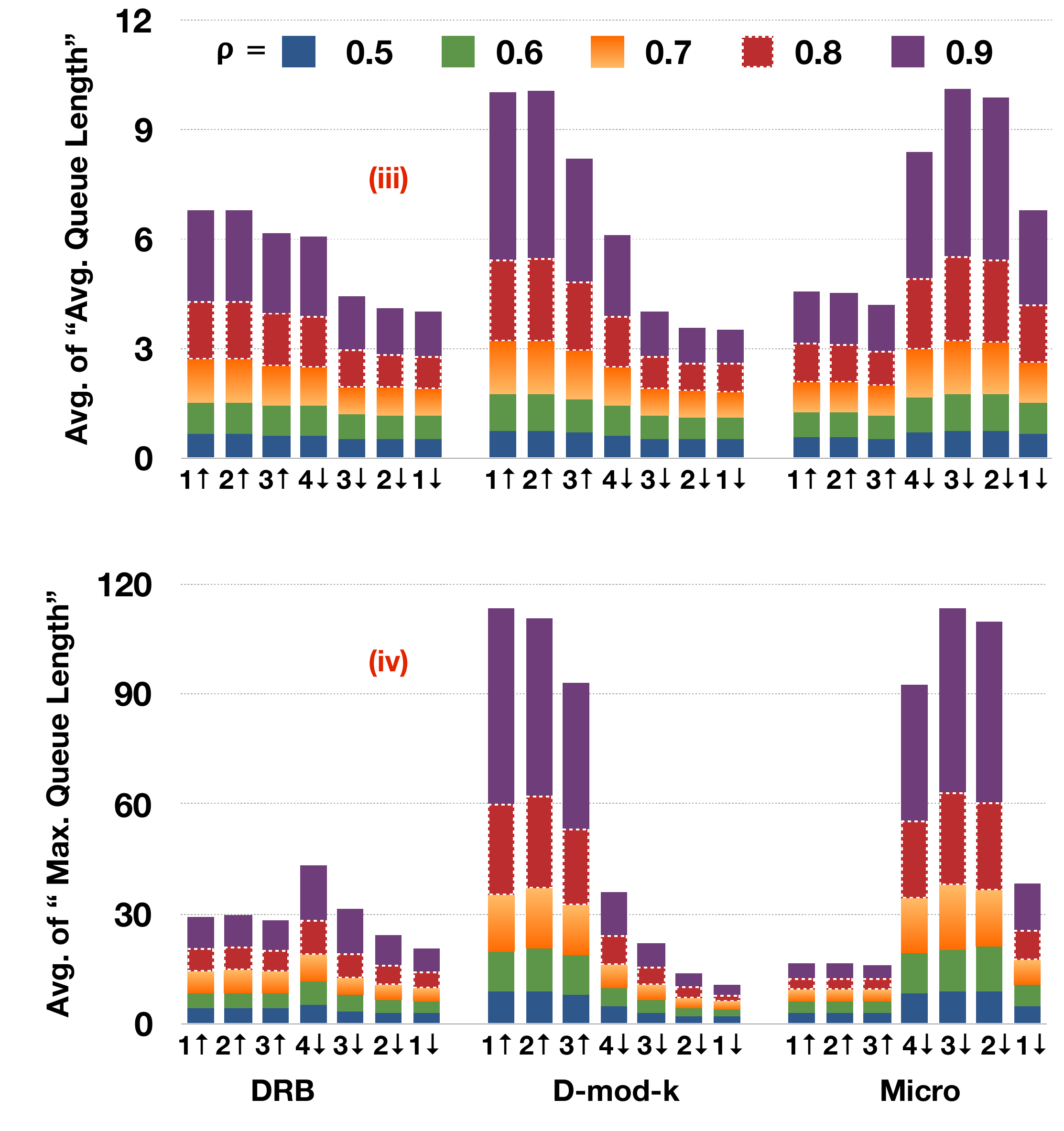} 
} 
\caption{Compare average and maximum queue-lengths under routing schemes: DRB, D-mod-k, and Micro. $k\uparrow$ and $k\downarrow$ indicate uplinks and downlinks of switches at layer $k$s.} 
\label{QueueCompare} 
\end{figure}

\begin{figure*}[!t]
\centering
\subfigure{
\begin{minipage}[b]{1\textwidth}
\includegraphics[width = 6.4 in]{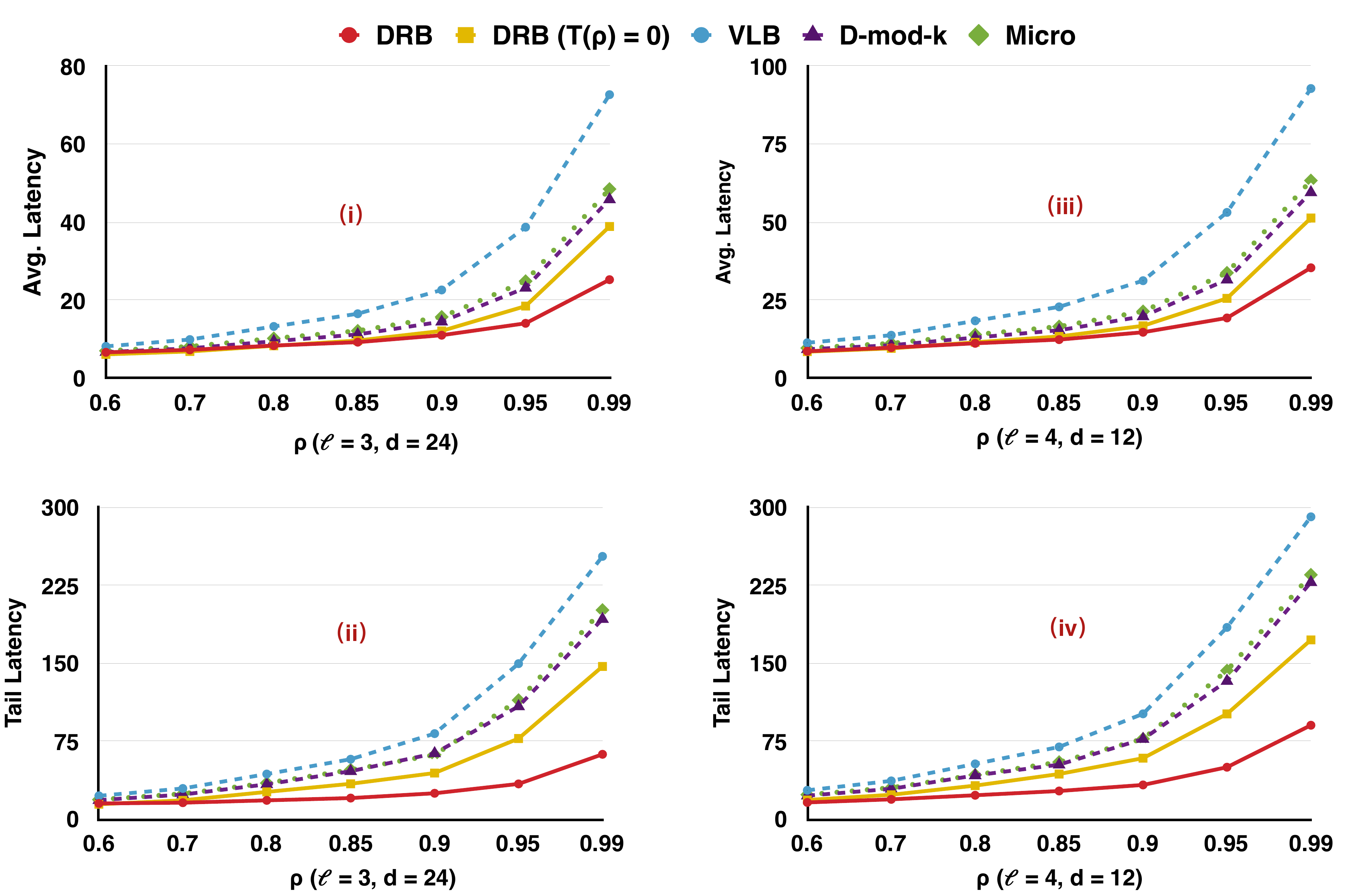}\\
\vspace{-2ex}
\end{minipage}
}
\centering
\caption{DRB achieves the best performance with substantial improvement especially in heavy-traffic scenarios. }
\label{DRB_packet}
\end{figure*}

Table 1 is presented to show the accuracy of the heuristic estimation. Column $max1 $ and $max2$ list the average of the maximum link load of $\mathcal{F}(3, 24)$ and $\mathcal{F}(4,12)$ respectively over 200 repeated experiments. Column $est1$ and $est2$ list the heuristic estimation according to formula (\ref{flow_congestion}) on the maximum link load of $\mathcal{F}(3, 24)$ and $\mathcal{F}(4,12)$ respectively.
Column $re1$ and $re2$ list the relative errors between the simulation and estimation results which are given by $re1 = \frac{| max1- est1 |}{est1}$ and $re2 = \frac{|max2 - est2|}{est2}$ respectively.
Results in$re1$ and $re2$ show that the heuristic estimation is accurate. 
We also observe that $re2$  is slightly larger than that of in $re1$. One explanation for this phenomenon is provided as follows. DRB just locally selects a different uplink from the same switch, whereas  estimation $(\ref{max_load})$ is derived assuming $d$ is large. Therefore, the estimation is more accurate when $d$ is larger.
The results presented in Table 1 and Fig. \ref{flow_model} in turn imply that (1) the number of redirected flows is greatly limited by an appropriate setting of the threshold, and (2) the performance of DRB for flow-routing in fat-tree networks is similar to the performance of the two-choice technique for balanced allocation in the balls-into-bins problem.

\subsection{Permutation-packet Routing}
We continue to test the effectiveness of DRB in the packet model. 
The setup of the experiment is as follows. 
At each time slot, hosts continuously send a set of permutation packets into the network. With probability $\rho$ each host has a packet to send into the network. At most one packet can be transmitted through a single link following the first come first served principle, and it takes one time slot to transmit a packet through a link.
Therefore, $\rho$ must be less than 1 in order to stabilize the system.
Specifically, we set the threshold of DRB as follows.
\begin{equation}
\label{packet_threshold}
T(\rho) = 1-\ln (1-\rho).
\end{equation}
Switches use the TTC technique to send arrival packets to queues of uplinks in their upward routing stage.

%


We run the experiment for $2000$ time steps for each value of $\rho$ ranging from $0.6$ to $0.99$, and do statistics over the last 500 time steps (when we thought the system becomes stable).
We calculate the average and maximum queue length of each layer with results presented in Fig. \ref{QueueCompare}. 
We calculate the average and tail latency of permutation packets that are sent into the network at the same time, and  present results in Fig. \ref{DRB_packet}. 
Tail latency is calculated by the longest time delay of packets that are sent into the network at the same time. 

Results in Fig. \ref{QueueCompare} show that DRB effectively balances both uplink and downlink loads, whereas Micro can balance uplink loads and D-mod-k can only balance  downlink loads only. Moreover, Fig. \ref{DRB_packet} shows that packets routed by the DRB scheme suffer substantially lower average and tail packet latency especially in the heavy-traffic scenarios.
By comparing the performance of DRB with threshold $(\ref{packet_threshold})$  and DRB with $T(\rho ) = 0 $ in Fig.  \ref{DRB_packet}, we can observe that the proposed threshold setting yields a significant performance improvement.

\section{Theoretical Analysis}
In this section, we focus on analyzing the TTC performance in the upward routing stage with the packet-routing model. We first analyze queueing dynamics of uplinks at the 1st layer under general assumptions which are similar to the ones for the supermarket model. 
We then generalize the derived result to other layers assuming that queue-size distributions of different uplinks are independent at the equilibrium state

The assumptions are listed as follows.
Each host send packets into the network following a Poisson process with rate $\lambda$. 
In other words, packets entering into the 1st layer follow the Poisson process of rate $N\lambda$, where $N$ is the number of links (up or down) at each layer of the network.
The destination of each packet is randomly selected among all other hosts in the network.
Each link transmits at most one packet at each time slot  under the first come first served discipline. The transmission time of a packet through a single link follows an exponential distribution with parameter $\mu = 1$. 
Because packets are randomly destined, we claim by symmetry that any uplink of a switch is selected with an identical probability by D-mod-k for a newly-arrival packet. 

\subsection{Queue-size Distribution of the 1st Layer}
We proceed the analysis through the following steps.
We first model queueing dynamics of uplinks at the 1st layer under the TTC technique as a Continuous-Time Markov Chain (CTMC). 
We then show that the CTMC has a unique equilibrium solution by proving that the CTMC is positive recurrent. 
Next, we apply Kurtz's theorem \cite{balance_jump, strongApproximation} to show that the equilibrium queue distribution converges to a fixed-point solution of deterministic differential equations which are derived from the CTMC.
Finally, we prove that the fixed-point solution decays doubly exponentially with respect to queue sizes as $N\to \infty$. 

We denote  by $Q_i(t)$ the queue length of the $i$-th uplink of the 1st layer at time $t$. 
Let $\textbf{Q}(t) = (Q_1(t), ..., Q_N(t))$ denote the corresponding queue size vector.
Let $\mathcal{J} \in \mathbb{Z}^{N}$ be the transition set of $\textbf{Q}(t) $ from one state to another, where $\mathbb{Z}^{N}$ denotes the N-ary Cartesian power of a set $\mathbb{Z} = \{0, 1, ...\}$.
Observe that transition is triggered by either arrivals or departures, so we have
\begin{displaymath}
\mathcal{T} = \{ \pm e_i :  i = 1, ..., N\}, 
\end{displaymath}
where $e_i$ is a standard unit vector. Specifically, $e_i$ indicates the arrival of a packet into the $i$-th queue and $-e_i$ indicates the departure of a packet from the $i$-th queue.  
Clearly, the CTMC is irreducible and nonexplosive.
We further show that the CTMC has a unique equilibrium solution in Theorem \ref{stability_analysis}. Theorem \ref{stability_analysis} is quite similar to Theorem 1 in \cite{slightlyMore}, but the proof for Theorem \ref{stability_analysis} is quite different. 

\begin{theorem}
\label{stability_analysis}
The CTMC ($\textbf{Q}(t)$) is irreducible, nonexplosive, and positive recurrent. Furthermore, at the equilibrium state, it satisfies 
\begin{displaymath}
E\left[ \frac{1}{N} \sum_{i=1}Q_i(t)\right] < c,
\end{displaymath}
for any $N$ with $c =  \frac{1}{1-\lambda}$. 
\end{theorem}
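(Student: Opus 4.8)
The plan is to dispose of irreducibility and non-explosivity by elementary arguments and then to obtain both positive recurrence and the mean bound from a single Foster--Lyapunov analysis built on a \emph{quadratic} Lyapunov function. Non-explosivity is immediate, since the total transition rate out of any state is bounded by $N\lambda + N$ (arrivals occur at aggregate rate $N\lambda$, and at most $N$ uplinks are ever in service, each at rate $\mu=1$), so rates are uniformly bounded. Irreducibility follows because from any state the chain can drain to the empty configuration (every nonempty queue departs at positive rate) and, conversely, from the empty state any configuration is reachable with positive probability, as the randomly selected alternative uplink gives every queue a positive chance of receiving each arrival. These are the routine parts.

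The heart of the argument is positive recurrence, and here the linear choice $V_1(\mathbf Q)=\sum_i Q_i=:S$ is insufficient: its drift is $\mathcal{A}V_1=N\lambda-B$, where $B=|\{i:Q_i>0\}|$ is the number of busy uplinks, and this need not stay bounded away from $0$ when all the mass piles into a single queue. I would instead take $V(\mathbf Q)=\tfrac12\sum_i Q_i^2$. Writing $a_j(\mathbf Q)$ for the instantaneous rate at which arrivals are routed to queue $j$ under TTC, a direct computation of the generator gives
\begin{equation}
\mathcal{A}V(\mathbf Q)=\sum_j a_j Q_j+\tfrac12 N\lambda-S+\tfrac12 B,
\end{equation}
where I used $\sum_j a_j=N\lambda$ (every packet joins exactly one queue). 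The crucial structural fact is that $a_j$ is a non-increasing function of $Q_j$: queue $j$ receives an arrival either as the unredirected D-mod-$k$ target or as a chosen alternative, which yields $a_j=\tfrac{\lambda}{N}\bigl(|\{i:Q_i\ge Q_j-T\}|+|\{i:Q_i> Q_j+T\}|\bigr)$, and both counts shrink as $Q_j$ grows. Hence the sorted sequences $(a_j)$ and $(Q_j)$ are oppositely ordered, and Chebyshev's sum inequality gives $\sum_j a_j Q_j\le \tfrac1N(\sum_j a_j)(\sum_j Q_j)=\lambda S$. Substituting this and $B\le N$ yields $\mathcal{A}V(\mathbf Q)\le -(1-\lambda)S+\tfrac12 N(1+\lambda)$, which is at most $-\varepsilon$ outside the finite set $\{S\le K\}$ for suitable $K$. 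The Foster--Lyapunov criterion then delivers positive recurrence and a unique stationary distribution.

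To extract the mean bound I would then evaluate the stationary drift identities. Applying $E[\mathcal{A}V_1]=0$ in equilibrium gives the throughput balance $E[B]=N\lambda$, i.e. $P(Q_1>0)=\lambda$ by symmetry. Applying $E[\mathcal{A}V]=0$ to the quadratic function, inserting $E[B]=N\lambda$ together with $E[\sum_j a_j Q_j]\le \lambda E[S]$, collapses to $(1-\lambda)E[S]\le N\lambda$, so that $E[\tfrac1N\sum_i Q_i]\le \tfrac{\lambda}{1-\lambda}<\tfrac{1}{1-\lambda}$ for every $N$, which is the stated claim with room to spare.

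I expect the main obstacle to be twofold. First, the monotonicity-plus-Chebyshev step must be justified carefully for the exact TTC rule, including the treatment of ties and of whether the alternative uplink may coincide with the D-mod-$k$ target; this single inequality encodes the load-balancing effect and drives the entire conclusion. Second, writing $E[\mathcal{A}V]=0$ for the quadratic $V$ presupposes a finite stationary second moment of $S$; I would secure this by first establishing a stronger (geometric/exponential) moment drift bound, or by a truncation-and-limit argument, before passing to the stationary expectation.
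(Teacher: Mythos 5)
Your proposal is correct, and its skeleton coincides with the paper's proof (Appendix C): a quadratic Lyapunov function, a negative-drift bound outside a finite set to invoke Foster--Lyapunov for positive recurrence, and then the stationary expected-drift identity to extract the mean bound. The genuine difference lies in how the arrival part of the drift is controlled. The paper never writes down the TTC arrival rates: it argues by direct comparison that, since a redirection only ever sends a packet to a strictly shorter queue, the quadratic increment under TTC is dominated, packet by packet, by the increment under one-choice (uniform) routing, yielding $\sum_{i} q_{e_i}\,[V(\mathbf{Q}+e_i)-V(\mathbf{Q})] \le \lambda \sum_{i}(2Q_i+1)$. You instead compute the exact per-queue rates $a_j$ and observe that $(a_j)$ and $(Q_j)$ are oppositely ordered, then apply Chebyshev's sum inequality to obtain $\sum_j a_j Q_j \le \lambda S$ --- the same inequality by a different route. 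The paper's coupling-style comparison is shorter; your derivation is more explicit, isolates the structural property that actually drives the result (per-queue arrival rate non-increasing in queue length, so the argument covers any routing rule with that property), and, because you additionally exploit the linear identity $E[B]=N\lambda$, you land on the slightly sharper constant $\lambda/(1-\lambda)$ rather than the paper's $(1+\lambda)/(2(1-\lambda))$ (both below the stated $c=1/(1-\lambda)$). Your closing caveat is also well taken: writing $E[\mathcal{A}V]=0$ at stationarity presupposes integrability that neither you nor the paper get for free, and the paper asserts this identity without comment, so your proposed truncation or stronger-drift fix is a refinement rather than a redundancy.
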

The proof for this theorem is presented in Appendix C. 

\subsection{Deterministic Representation of the Equilibrium-state CTMC}
To investigate the tail distribution of queue sizes, we use the following notation to represent the state of queues at the 1st layer. 
Let $n_i(t)$ denote the number of queues with $i$ packets at time $t$.
Then $m_i(t) = \sum_{k =i}^\infty n_k(t)$ represents the number of queues with at least $i$ packets at time $t$.
Correspondingly, $p_i(t) = \frac{n_i(t)}{N}$ denotes the fraction of queues with $i$ packets,
and $s_i(t) = \frac{m_i(t)}{N} = \sum_{j=i}^\infty p_j(t)$ denotes the fraction of queues with at least $i$ packets.
It worths noting that $p_i(t) = s_i(t) - s_{i+1}(t)$.
We then represent the state of queues at the same layer at time $t$ by vector 
\begin{math}
M(t) = (m_0(t), m_1(t), ...).
\end{math}
Consider $i$ could be infinitely large, the dimension of $M(t)$ is infinite. 
We re-define transition set by $\textbf{L} = \{l_0, \pm l_i \mid i= 1, 2, ..\}$, where $l_i$ indicates a packet arriving at one of the queues of size $i-1$, and $-l_i$ implies a packet departing from one of the queues with size $i$.
Note that if a packet is sent to a queue of size $i-1$, then the value of $m_i(t)$ increases by one and the value of other $m_j(t)$ $\forall j\neq i$ remains the same.
We then claim that the state-dependent transition rate, $\beta_{l}(M(t))$ , is given by
\begin{displaymath}
\begin{split}
&\beta_{\ell_i}(M(t)) = N \lambda p_{i-1}(t)(s_{i-T-1}(t) +s_{i+T}(t)), i\ge 1, \\ 
&\beta_{-\ell_i}(M(t)) = Np_i(t), i\ge 1.
\end{split}
\end{displaymath}

The claim for the departure rates is obvious. 
In what follows, we only need to verify the above claim for arrival rates. 
First, we need to calculate the probability that a packet is sent to a queue of size $i-1$. If the probability is $p$, then the  transition rate of $l_i$ is $N\lambda p$. 
To calculate the probability, we consider two disjoint events.  
The first event is that the first queue has size $i-1$ and the second queue has a size of at least $i-T-1$. 
The occurrence probability of this event is $\frac{1}{N}p_{i-1}s_{i-T-1}$.
The second event is that the first queue has a size of at least $i+T$ and the second queue has size $i-1$.
The probability of this event is $p_{i-1}s_{i+T}$.
In conclusion, the arrival rate of any queue of size $i-1$ is $N \lambda p_{i-1}(s_{i-T-1} + s_{i+T})$.

Then, $M(t)$ evolves over time $t$ as follows.  
\begin{displaymath}
M(t)  = M(0) + \sum_{l\in \textbf{L} } l N_{l}(t),
\end{displaymath}
where $N_l(t)$ counts the number of a specific transition $l$ up to time $t$, and can be formulated as follows.
\begin{displaymath}
N_{l}(t) = Y_{l}(\int_{0}^{t}\beta_{l} (M(u)) du), 
\end{displaymath}
where $Y_{\textit{l}}$ indicates an independent unit Poisson process.
It then follows that
\begin{displaymath}
M(t) = M(0) +\sum_{l\in \textbf{L}} lY_{l} (\int_{0}^{t}\beta_{l}(M(u)) du ).
\end{displaymath}
Denote $S(t) = (s_0(t), s_1(t), ...,)$ (a vector with an infinite dimension). Divide at both sides of the above equation by $N$, we have 
\begin{equation}
\small
\label{s_equation}
\begin{split}
S(t) &= S(0) + \sum_{l\in \textbf{L}} \frac{l}{N} Y_{l}  (N\int_{0}^{t}\tilde{\beta}_{l} (S(u)) du)\\
& = S(0) + \sum_{l\in \textbf{L}}  \frac{l}{N}\tilde{Y}_{l}  (N\int_{0}^{t}\tilde{\beta}_{l} (S(u))  du) +\int_{0}^t F(S(u))du,
\end{split} 
\end{equation}
where $\tilde{Y}(x) =  Y(x) -  x $ is a Poisson process centered at its expectation,  $\tilde{\beta}_{l}(S(t)) = \frac{1}{N}\beta_{l}(M(t))$, and
\begin{displaymath}
F(S(u)) = \sum_{l\in \textbf{L}} l \tilde{\beta}_{l} (S_N(u)) du),
\end{displaymath}
where 
\begin{displaymath}
\tilde{\beta}_{l}(S(t)) =
\begin{cases}
\lambda p_{i-1}(t)(s_{i-T-1} (t) + s_{i+T}(t)), & l =  l_i, i \ge 1,  \\
p_i(t), & l = -l_i, i \ge 1.
\end{cases}
\end{displaymath}
The initial condition $s_0 =1$ is obvious, whereas the initial conditions $s_{-T} = ... = s_{-1} = 1$ are appended to achieve a uniform formula expression for items $i$, $i\le T$.
As $N\to \infty$, the value of the centered Poisson process shall go to 0 by the law of large numbers.
According to the Kurtz's theorem \cite{strongApproximation, limit_kurtz}, $\lim_{N\to \infty } S(t)  = s(t)$ under mild conditions with $s(t)$ defined as below.
\begin{equation}
\label{deterministics}
s(t)  = s(0) +\int_{0}^t F(s(u))du, 
\end{equation}
where $s(0) = \lim_{N\to \infty} S(0)$.
Some details on applying the Kurtz's theorem is provided in Appendix D.

Previously, we have shown that the CTMC has an equilibrium state. 
 At the equilibrium state, we have $\frac{ds(t)}{dt} = F(s(t)) = 0$. 
In other words, the equilibrium solution can be well approximated by the fixed point solution of equation $(\ref{deterministics})$ when $N$ is large. 
We denote by $(s_i)$ (without $t$ parameter) the fixed-point solution of $(\ref{deterministics})$. 
Then we have
\begin{equation}
\label{fluid-limit}
\begin{cases}
F(s) =\lambda p_{i-1} (s_{i-T-1} +s_{i+T}) -(s_i -s_{i+1}) = 0,  i \ge 1\\
s_{-T}  =  ...  =  s_{0} = 1.
\end{cases}
\end{equation}

\subsection{Doubly Exponential Decay of Queue Sizes}
\begin{defn}
We say sequence $(s_i)$ decays doubly exponentially if there exist positive constants $\lambda <1$, $c$, $\alpha >1$, and $n$, such that for all $i\ge n$, it has $s_i \le c\lambda^{\alpha^i}$. 
\end{defn}

\begin{theorem}
The fixed-point solution of equation (\ref{deterministics}), denoted  $(s_i)$, decays doubly exponentially.
\end{theorem}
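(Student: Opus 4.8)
The plan is to work entirely with the fixed-point equations (\ref{fluid-limit}), treating $(s_i)$ as a fixed nonnegative sequence rather than as a time-dependent process. Three structural facts will be used throughout. First, $(s_i)$ is non-increasing with $s_0=1$ and $0\le s_i\le 1$, since $s_i$ is the limiting fraction of queues holding at least $i$ packets, so $p_{i-1}=s_{i-1}-s_i\ge 0$. Second, $s_i\to 0$: the sequence is monotone and bounded below, hence converges to some $L\ge 0$, and $L>0$ would force the mean queue length $\sum_{i\ge 1}s_i$ to diverge, contradicting the finite-mean bound of Theorem \ref{stability_analysis}; thus $L=0$. Third, the arrival rate obeys $\lambda<1$ (as in Theorem \ref{stability_analysis}).

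The first real step is to turn the local balance relation in (\ref{fluid-limit}), namely $s_i-s_{i+1}=\lambda(s_{i-1}-s_i)(s_{i-T-1}+s_{i+T})$, into a closed multiplicative recursion. Summing this identity over all $i\ge j$ collapses the left-hand side telescopically to $s_j$ (the summability being guaranteed by the finite mean). On the right-hand side, monotonicity gives $s_{i-T-1}+s_{i+T}\le 2s_{i-T-1}\le 2s_{j-T-1}$ for every $i\ge j$, so this factor pulls out of the sum, and $\sum_{i\ge j}(s_{i-1}-s_i)=s_{j-1}$. This yields the key inequality
\[
s_j \le 2\lambda\, s_{j-1}\, s_{j-T-1} \le 2\lambda\, s_{j-T-1}^{2}, \qquad j\ge 1,
\]
where the last step uses $s_{j-1}\le s_{j-T-1}$ (monotonicity, since $T\ge 0$). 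The appended boundary values $s_{-T}=\cdots=s_0=1$ keep every term well defined.

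The second step extracts the doubly exponential decay from $s_j\le 2\lambda\,s_{j-T-1}^{2}$. Because the recursion jumps back by exactly $K:=T+1$ indices, I would decompose $\{i:i\ge n\}$ into the $K$ residue classes modulo $K$ and treat each separately. Fix $n$ with $2\lambda s_n<1$ (possible since $s_i\to 0$) and set $\theta:=2\lambda s_n<1$; monotonicity then gives $2\lambda s_i\le\theta$ for all $i\ge n$. Along a residue class, writing $t_m:=s_{i_0+mK}$ with base $i_0\in\{n,\dots,n+K-1\}$, the key inequality reads $t_m\le 2\lambda\,t_{m-1}^{2}$; putting $\beta_m:=2\lambda t_m$ linearizes this into $\beta_m\le\beta_{m-1}^{2}$, so $\beta_m\le\beta_0^{2^m}\le\theta^{2^m}$ and $s_{i_0+mK}\le\tfrac{1}{2\lambda}\theta^{2^m}$. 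To reassemble, note that for $i\ge n$ one has $m=\lfloor(i-i_0)/K\rfloor\ge (i-n)/K-1$, hence $2^m\ge C\alpha^{i}$ with $\alpha:=2^{1/K}>1$ and $C=2^{-1-n/K}>0$; since $\theta<1$ this gives $s_i\le\tfrac{1}{2\lambda}(\theta^{C})^{\alpha^{i}}$. Taking $c:=1/(2\lambda)$, the decay base $\theta^{C}\in(0,1)$ (playing the role of the constant called $\lambda$ in the definition, renamed to avoid the clash with the arrival rate), and $\alpha=2^{1/(T+1)}$ exhibits precisely the constants required.

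I expect the delicate part to be the index bookkeeping in this second step: the threshold shifts the squaring from every index (the classical $T=0$ case, where $s_j\le 2\lambda s_{j-1}^{2}$ gives base $\alpha=2$) to once every $T+1$ indices, so the decay base degrades to $\alpha=2^{1/(T+1)}$, and one must check that this still satisfies $\alpha>1$ and that the per-class estimates knit together into a single bound uniform over all $i\ge n$. The only other point needing care is the rigorous justification that $s_i\to 0$ and that the infinite summation in the first step is legitimate; both follow from the finiteness of the mean queue length established in Theorem \ref{stability_analysis}.
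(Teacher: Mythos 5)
Your proposal is correct, but it reaches the conclusion by a genuinely different route than the paper. The paper spends most of its effort on an exact rearrangement of the fixed-point equations (the $\delta_i,\gamma_i,\eta_i$ decomposition and the telescoping identity in (\ref{elimination})--(\ref{label_s})) to obtain the sharp bound $s_i \le \lambda s_{i-1}s_{i-1-T}$ for all $i\ge 1$; it then dominates $(s_i)$ by $z_i=\lambda z_{i-1}z_{i-1-T}$, takes logarithms, and analyzes the linear recursion $g_i=g_{i-1}+g_{i-1-T}$ to get $g_i\ge c\alpha^i$ for any $\alpha$ with $\alpha^{T+1}\le \alpha^{T}+1$. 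You bypass all of that algebra with a single monotonicity estimate ($s_{i-T-1}+s_{i+T}\le 2s_{j-T-1}$ inside the tail sum), which yields the weaker inequality $s_j\le 2\lambda s_{j-1}s_{j-T-1}\le 2\lambda s_{j-T-1}^2$; since the factor $2\lambda$ may exceed $1$, you must first pass to a tail index $n$ with $2\lambda s_n<1$, and you then extract the decay via residue classes modulo $T+1$ and the squaring trick $\beta_m\le\beta_{m-1}^2$. What your route buys: it is substantially simpler, it handles $T=0$ and $T\ge 1$ uniformly (the paper treats $T=0$ separately), and it makes explicit two facts the paper uses silently, namely monotonicity of $(s_i)$ and $s_i\to 0$ (the paper encodes the latter in the tail-sum definition $s_i=\sum_{j\ge i}p_j$). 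What the paper's route buys: sharper constants --- its inequality holds from $i=1$ with no tail index, and its exponent base, the root of $\alpha^{T+1}=\alpha^{T}+1$ (the golden ratio when $T=1$), strictly exceeds your $\alpha=2^{1/(T+1)}$, because its recursion keeps both terms $g_{i-1}+g_{i-1-T}$ whereas your bound in effect retains only $2g_{i-1-T}$. One point to tighten: your justification that $s_i\to 0$ invokes Theorem \ref{stability_analysis}, which bounds the mean queue length of the finite-$N$ CTMC, so transferring it to the fixed point of (\ref{deterministics}) needs an exchange-of-limits argument (e.g., Fatou along the weak convergence); alternatively you may simply adopt the paper's convention that the fixed point is the tail of a proper distribution, which forces $s_i\to 0$ directly. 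This is a presentational refinement, not a gap, since the paper relies on the same fact without proof.
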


\begin{proof}
If $T= 0$, it becomes the classical two-choice problem  \cite{TwoChoices} with equations in (\ref{fluid-limit}) being
\begin{displaymath}
s_{i} -s_{i+1}  = \lambda s^2_{i-1} - \lambda s^2_{i}.
\end{displaymath}
Then we get
\begin{displaymath}
s_i  =\sum_{j = i}^{\infty} (s_i -s_{i+1})  = \lambda \sum_{j=i}^\infty (s^2_{i-1} - s^2_{i}) = \lambda s^2_{i-1}.
\end{displaymath}
Since $s_0= 1$, one can prove by inductive reasoning that $s_i = \lambda^{2^i-1}$ which decays doubly exponentially.
We next focus on the proof for $T\ge 1$, which is much more complicated.

By rearranging equation (\ref{fluid-limit}), we get
\begin{displaymath}
\small
\begin{split}
&p_i = \lambda p_{i-1} [(s_{i-T-1} - s_{i-1})  +(s_{i-1} +s_i) - (s_i -s_{i+T})] \\
& =  \lambda p_{i-1}(s_{i-1} +s_i) + \lambda  p_{i-1}  (s_{i-T-1}  -s_{i-1}) - \lambda p_{i-1}(s_i -s_{i+T}).
\end{split}
\end{displaymath}
To facilitate the proof, we define 
\begin{equation}
\nonumber
\begin{split}
&\delta_i =\lambda p_{i-1} (s_{i-1}  + s_i)  = \lambda s^2_{i-1} -\lambda s^2_{i}, \\
&\gamma_i =  \lambda  p_{i-1}  (s_{i-T-1}  -s_{i-1}) = \lambda p_{i-1}(p_{i-T-1} + ... + p_{i-2}),\\
&\eta_i = \lambda p_{i-1}(s_i -s_{i+T}) = \lambda p_{i-1}(p_i + ...+ p_{i+T-1}).
\end{split}
\end{equation}
By rearranging items in $ \eta_i $, we have 
\begin{equation}
\begin{split}
\frac{1}{\lambda}\sum_{j= i}^\infty  \eta_j  &=  p_i p_{i-1} + p_{i+1}(p_{i-1}+ p_i ) + ... \\
&+ p_{i+T-1}(p_{i-1} + ... + p_{i+T-2} )\\
&+\sum_{k=i+T}^\infty p_k(p_{k -T} + ... +  p_{k-1})\\
&= p_i p_{i-1} + p_{i+1}(p_{i-1}+ p_i ) + ... \\
& + p_{i+T-1}(p_{i-1}+...+ p_{i+T-2} ) + \sum_{k = i+T}^\infty \gamma_{k+1} .\\
\end{split}
\end{equation}
It follows that
\begin{equation}
\small
\label{elimination} 
\begin{split}
&\frac{1}{\lambda}( \sum_{j=i}^\infty \gamma_j -\sum_{j=i}^\infty \eta_j)  \\
& = \sum_{j= i}^{i+T-1}p_{j-1}(p_{j-T-1} + ... + p_{j-2}) + \sum_{k=i+T}^{\infty} \gamma_k\\
&- [p_i p_{i-1} + p_{i+1}(p_{i-1}+ p_i ) + ...+ p_{i+T-1}(p_{i-1} +...+ p_{i+T-2} )]\\
& -\sum_{k= i+T}^\infty \gamma_{k+1}\\
&= \sum_{j= i}^{i+T-1}p_{j-1}(p_{j-T-1} + ... + p_{j-2})  \\
& -[p_i p_{i-1} + p_{i+1}(p_{i-1}+ p_i ) + ...+ p_{i+T-2}(p_{i-1}+...+ p_{i+T-3} )]\\
&=\sum_{j= i}^{i+T-1} p_{j-1}( s_{j-T}- s_{i-1}).
\end{split}
\end{equation}
Then we have
\begin{equation}
\label{label_s}
\begin{split}
&s_i \triangleq \sum_{j = i}^\infty p_j  =\sum_{j =i}^\infty(\delta_j + \gamma_j - \eta_j) \\
&= \lambda s^2_{i-1}+  \lambda \sum_{j =i}^{i+T-1}  p_{j-1} (s_{j-T} -s_{i-1}).
\end{split}
\end{equation}
The first item in the last equality is derived from equation
\begin{math}
\sum_{j=i}^\infty \delta_j = \lambda s^2_{i-1},
\end{math}
and the latter item is derived from equation (\ref{elimination}).

Reformulate $s_i$ as follows.
\begin{displaymath}
s_i  = \hat{s}_i+  \lambda \sum_{j =i+1}^{i+T-1}  p_{j-1} (s_{j-T} -s_{i-1}). 
\end{displaymath}
where $\hat{s}_i =\lambda s^2_{i-1} +\lambda p_{i-1}( s_{i+T-1} -s_{i-1})$. 
We then claim that
\begin{equation}
\label{claim_s}
s_i \le \lambda s_{i-1}s_{i-1-T}.
\end{equation}
Proof of the claim. 
First of all, we have
\begin{equation}
\nonumber
\begin{split}
\hat{s}_i
& = \lambda s^2_{i-1} + \lambda (s_{i-1} -s_{i})(s_{i-T-1} -s_{i-1})\\
& = \lambda s^2_{i-1} +\lambda s_{i-1}s_{i-T-1} -\lambda s^2_{i-1} -\lambda s_i(s_{i-T-1} -s_{i-1})\\
& = \lambda s_{i-1}s_{i-T-1} - \lambda s_i (s_{i-T-1} -s_{i-1}). \\
\end{split}
\end{equation}
On the other hand, we have 
\begin{displaymath}
\small
\begin{split}
&- s_i (s_{i-T-1} -s_{i-1})  + \sum_{j=i+1}^{i+T-1} p_{j-1}(s_{j-T} -s_{i-1} ) \\
&\le  - s_i (s_{i-T-1} -s_{i-1}) +\sum_{j = i+1}^{i+T-1} p_{j-1}(s_{i-T} - s_{i-1} )\\
& =  (s_{i-T-1} -s_{i-1} ) (\sum_{j=i+1}^{i+T-1} p_{j-1} -s_i) \le 0.\\
\end{split}
\end{displaymath}
The claim is thus validated. 

According to the initial condition, $s_{j} = 1, j\in [-T, 0]$, we have
\begin{displaymath}
s_i \quad
\begin{cases}
= 1, & i < 1\\
\le  \lambda s_{i-1}s_{i-T-1}, & i \ge 1.
\end{cases}
\end{displaymath}
To derive an upper bound of the tail distribution of $(s_i)$, we define a sequence $(z_i)$ as follows.
\begin{equation}
z_{i} = 
\begin{cases}
1, & i\in [-T, 0],\\
\lambda  z_{i-1} z_{i-T-1},& i\ge 1.
\end{cases}
\end{equation}
It is easy to see $z_i\ge s_i$ and $z_i  = \lambda^i$ for $i\in [0, T+1]$. 
Moreover, we have
\begin{displaymath}
\ln_\lambda z_i  =  1+ \ln_{\lambda} z_{i-1} +\ln_{\lambda} z_{i-1-T}, i \ge 1.
\end{displaymath}
Define 
\begin{equation}
\label{transformation}
\ln_{\lambda} z_i  +1 =  g_i.
\end{equation}
We get
\begin{equation}
\label{recursive}
g_k  = k+1,  k \in [0, T+1]; \ g_i = g_{i-1} + g_{i-1-T}\  \forall i\ge 1.
\end{equation}


Fix $T$ with $T\ge 1$. Clearly, there exists an $\alpha >1$ such that $\alpha^T \le \frac{1}{\alpha-1}$. That is 
\begin{equation}
\label{alpha_setting}
\alpha^T +1\ge \alpha^{T+1}.
\end{equation}
If $g_{j} \ge c\alpha^{j} $  for all $j <i$,  according to equation (\ref{alpha_setting}),  we have 
\begin{displaymath}
g_i  = g_{i-1} + g_{i-T-1}\ge c \alpha^{i-1} + c\alpha^{i-1-T} \ge c\alpha^i.
\end{displaymath}
In particular, $c$ can be set as follows:
\begin{displaymath}
c =  \min_{i\in [0, T+1]} \frac{i+1}{\alpha^i},
\end{displaymath}
which implies that  
\begin{equation}
\label{recursive2}
g_i \ge c\alpha^i, \forall i\in[0, T+1].
\end{equation}
For any $T\ge 1$, we conclude that there exists a positive constant $c$ and an $\alpha >1$ such that $g_i \ge c\alpha^i $ for all $i\ge 1$.
In other words, we have 
\begin{displaymath}
s_i \le z_i \le  \lambda^{c\alpha^i-1}, \forall i\ge 1.
\end{displaymath}
We thus finish the proof for the theorem. 
\end{proof}

\subsection{Queue-size Distributions of Higher Layers}
To estimate the queue-size distribution of other layers, we assume that the equilibrium distribution of each queue at the 1st layer is identical and asymptotically independent.
Intuitively, we infer by symmetry that each queue has an identical distribution at the equilibrium state. 
Furthermore, Bramson et al. \cite{mean_field_2} validated the equilibrium distribution of each queue is independent to each other when considering the supermarket model as long as that the arrival rate is small and the service distribution has 1st and 2nd moments. 
Check the size of any queue of the 1st layer at any time after the CTMC of the 1st layer arrives at the equilibrium state. A queue is found to have $i$ packets with probability $p_i$ and to have at least $i$ packets with probability $s_i$.

\begin{figure}[ht]
\begin{center}
\includegraphics[width = 3 in]{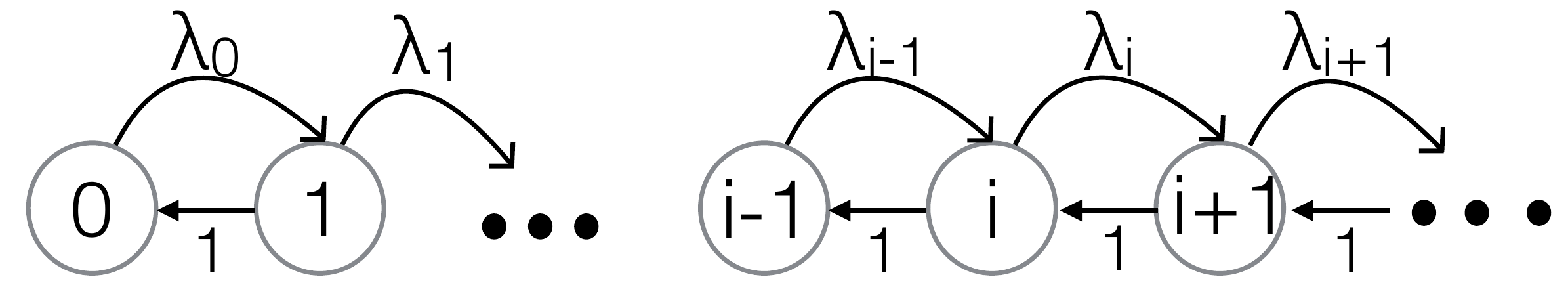}
\caption{The CTMC of a single queue.  $i$ indicates that the queue is at state $i$ with $i$ packets.}
\label{Markov_Chain}
\end{center}
\end{figure}

In what follows, we consider a fixed queue of the 1st layer and aim to show that the CTMC of this queue (see Fig. \ref{Markov_Chain} ) is time-reversible at the equilibrium state.
Obviously, arrival rate $\lambda_i$ in Fig. \ref{Markov_Chain} in is state-dependent because of load comparison brought by TTC. 
Assume the current state of this queue is $i$. 
Then, we claim that the probability that a newly arrival packet is sent to this queue is $\frac{1}{m} (s_{i-T} + s_{i+T+1})$, where $m$ is the number of available output queues in a switch.
$m =d$ if the switch is an intermediate switch; $m=2d$ if the switch is a core switch.
To prove the claim, we only need to consider two disjoint events. The first event is this queue is selected as the first choice while another queue selected as the second choice has at least $i-T$ packets. 
The second event is this queue is selected as the second choice while another queue selected as the first choice has at least $i+T+1$ packets.
Note that the probability that this queue is selected as the first and second choice is $\frac{1}{m}$ and  $\frac{m-1}{m}\frac{1}{m-1} = \frac{1}{m}$ respectively.
As a result, the occurrence probability of the first and the second events is $\frac{1}{m}s_{i-T}$ and $\frac{1}{m}  s_{i+T+1}$ respectively.

Observe that arrival rate of a single switch is $m\lambda$. Therefore, for any $i\ge 1$, we have
 \begin{equation}
 \label{transition_rate}
\lambda_i = (m \lambda)*[ \frac{1}{m}(s_{i-T} + s_{i+T+1})] = \lambda (s_{i-T} + s_{i+T+1}).
\end{equation} 
According to the previous result in equation (\ref{fluid-limit}), we can see the local-balance equations of a single-queues CTMC is satisfied. That is, we have 
 \begin{equation}
\label{local_balance}
p_{i+1} = \lambda_i p_i,
\end{equation}
which implies that the single-queue CTMC is time-reversible.
We thus conclude that the departure process of a single queue has the same distribution as the arrival process, namely, a Poisson process.
Moreover, since \begin{math}
\sum_{i=0}^\infty \lambda_i p_i = \sum_{i=0}^\infty  p_{i+1} = \sum_{i=1}^\infty p_i = \lambda,
\end{math}  the expected departure rate of a single queue is $\lambda$.

By the properties of the time-reversible CTMC and by inductive reasoning, we further infer that packet arrivals of each layer follow a Poisson process with approximate rate $ N\lambda$. As a result, the equilibrium queue-size distribution of each layer should be independent and identical. In particular, the numerical results in Fig. \ref{QueueCompare} also show that average and maximum queue lengths of uplinks at different layers are roughly the same.

\section{Conclusion}
This paper presents a simple and effective  load-balanced routing scheme called DRB for fat-tree networks, which incorporates the randomized load-balancing technique called TTC into the deterministic D-mod-k routing scheme.
TTC uses a threshold to effectively reduce traffic redirection operation, which contributes  to evenly distributing traffic among uplinks and downlinks.
The experimental results show that DRB succeeds to achieve low-levels of path collision in the flow model and low average and tail latency in the packet model. 
Theoretical results show that the performance of TTC in balanced allocation is similar to that of the two-choice technique.

\bibliographystyle{IEEEtran}
\bibliography{sigproc} 

\appendices
\section{Proof of the Integer-fixing Downward Routing}
Consider an S-D pair $(h^s, h^d)$ with distance $k$.
According to the host-switch connection definition, the source host connects directly to the switch with label code  $(h^s_2, ..., h^s_\ell)$; the destination host connects directly to the switch with label code $(h^d_2, ..., h^d_\ell)$.
According to the switch-switch connection definition in equation (\ref{switch_connection}), we can infer the label codes of switches transversed by the upward and downward paths at each layer as follows.
\begin{enumerate}[(i)]
\item The upward path $ (p_{up}^1, ..., p_{up}^{k-1})$ transverses a switch at layer $i$ ($ i \le k-1$) whose label code is 
\begin{equation}
\label{up_switch}
(p_{up}^1,..., p_{up}^{i-1}, h^s_{i+1}, .., h^s_{\ell}),
\end{equation} 
\item The label code of the \textbf{transition} switch at layer $k$ is
\begin{equation}
\label{top_switch}
(p_{up}^1,..., p_{up}^{k-1}, h^s_{k+1}, .., h^s_{\ell}).
\end{equation}
\item The  downward path $(p_{dn}^k, ..., p_{dn}^1)$ transverses a switch at layer $j$ $(1\le j \le k)$ whose label code is
\begin{equation}
\label{down_switch}
(p_{up}^1, ..., p_{up}^{j-1},  p_{dn}^{j+1}, ..., p_{dn}^{k}, h^s_{k+1}, ..., h^s_{\ell}).\\
\end{equation}
\end{enumerate}
Obviously, the switch-label for $j=1$ is $(p_{dn}^2, ..., p_{dn}^k, h^{s}_{k+1}, ..., h^s_{\ell})$.
In what follows, it has
\begin{equation}
\label{edge_switch}
(p_{dn}^2, ..., p_{dn}^k, h^{s}_{k+1},..., h^{s}_\ell ) = (h^d_2, ...., h^d_\ell ), \ p_{dn}^1 = h^d_1.
\end{equation}
Therefore, we can explicitly represent the downward path as follows.
\begin{equation}
\label{integer_fixing}
P_{dn}(h^s, h^d, k)= (h^d_{k}, ..., h^d_1).
\end{equation}
Moreover, we re-write down (\ref{down_switch}) as follows.
\begin{equation}
\label{down_2}
(p_{up}^1,..., p_{up}^{j-1}, h^d_{j+1}, ..., h^d_{\ell}).
\end{equation}
Thus, we finish the proof.

\section{Proof of D-mod-k}
To prove this theorem, we only need to show that D-mod-k paths of any two pairs with different destinations do not share any common downlinks.
Consider any two S-D pairs $(h^s, h^d)$ and $(\hat{h}^s, \hat{h}^d)$ with distance $k$ and $\hat{k}$ respectively. 
The corresponding D-mod-k upward paths are $(h^d_1, ..., h^d_{k-1})$ and $(\hat{h}^d_1, ..., \hat{h}^d_{k-1})$.
According to equation (\ref{down_switch}), the label codes of switches at layer $m$ that are transversed by the downward paths of these two S-D pairs can be expressed follows. 
\begin{equation}
\label{up_port_2}
\begin{cases}
(h^d_1 , ..., h^d_{m-1}, h^d_{m+1} , ..., h^d_\ell), &m \le k\\
( \hat{h}^d_1, ..., \hat{h}^d_{m-1}, \hat{h}^d_{m +1} , ..., \hat{h}^d_\ell), &m \le \hat{k}.
\end{cases}
\end{equation}
Suppose the two downward paths collide at layer $m$ ($m\le \min_{k, \hat{k}}$), which means that the two paths transverse the same down port of a switch at layer $m$. Then, we have 
\begin{displaymath}
\begin{cases}
(h^d_1 , ..., h^d_{m-1}, h^d_{m+1} , ..., h^d_\ell) = ( \hat{h}^d_1, ..., \hat{h}^d_{m-1}, \hat{h}^d_{m +1} , ..., \hat{h}^d_\ell),\\
h^d_m = \hat{h}^d_m.
\end{cases}
\end{displaymath}
The former equation implies that the two downward paths pass through the same switch at layer $m$, the latter equation implies that they transverse the same down-port of this switch. A direct implication follows that the two pairs must have the same destination. 
Because this derivation applies to any $m$ with $m\le \min\{k, \hat{k}\}$, we conclude that pairs with different destinations must have different downlinks. Thus, we finish the proof of this theorem.


\section{Stability Analysis}
\begin{proof}
Clearly, the CTMC  is irreducible and nonexplosive.
In what follows, we aim to show the CTMC is also positive recurrent.
We define a Lyapunov function $V(\textbf{Q}(t)) = \sum_{i=1}^{N} Q^2_i(t)$, where $\textbf{Q}(t)\in \mathbb{Z}^N$ indicates CTMC state at time $t$.  
The domain of $\textbf{Q}(t+1) $ is $ \{\textbf{Q}(t) + \textbf{t} \mid \textbf{t} \in \mathcal{T}\}$.
Denote by $q_{e_i}$ and $q_{-e_i}$ the arrival rate and departure rate of the $i$-th queue respectively. 
Clearly, 
\begin{math}
q_{-e_i}(t)  = \mu = 1.
\end{math}
Moreover, 
\begin{equation}
\label{departure_equ}
q_{-e_i} (t)[ V(\textbf{Q}(t)-e_i) - V(\textbf{Q}(t))]^+  \le  -2Q_i(t) + 1.
\end{equation}
Assume a newly arrival packet joins queue $i$ with probability $p_{e_i}(t)$, then the expected  Lyapunov drift is given by\begin{equation}
\begin{split}
& \sum_{i=1}^N  p_{e_i}(t)[ V(\textbf{Q}(t)+e_i) - V(\textbf{Q}(t))] =  \sum_{i=1}^N p_{e_i}(t) (2Q_i +1 ) \\
\end{split}
\end{equation}
To derive an upper bound of the expected Lyapunov drift, we compare the TTC  scheme with the single-choice scheme provided that the current state is $\textbf{Q}(t)$. 
The single-choice scheme always sends a newly arrival packet to a queue selected at random. The corresponding expected Lyapunov drift caused by an arrived packet is $\frac{1}{N}\sum_{i=1}^N  (2Q_i +1 )$.
In contrast,  the TTC technique sends the packet to a randomly selected queue if and only if the length of the randomly selected queue is at least $T$ less than a queue determined by D-mod-k.
Therefore, the expected Lyapunov drift under the TTC technique must be no larger than $\frac{1}{N}\sum_{i}^N (2Q_i +1)$.
In conclusion, we have 
\begin{displaymath}
\frac{1}{N} \sum_{i=1}^N  p_{e_i}(t)[ V(\textbf{Q}(t)+e_i) - V(\textbf{Q}(t))] \le \frac{1}{N}\sum_{i=1}^N (2Q_i +1).
\end{displaymath}
Recall the arrival rate of packets is $N\lambda$, then 
$q_{e_i }(t) = N\lambda p_{e_i}(t)$.
As a result, we have
\begin{equation}
\label{arrival_equ}
 \sum_{i=1}^N  q_{e_i}(t)[ V(\textbf{Q}(t)+e_i) - V(\textbf{Q}(t))] \le \lambda \sum_{i=1}^N (2Q_i +1).
\end{equation}

Combining Lyapunov drift caused by both the arrival and departure of packets at time $t$, we have
\begin{displaymath}
\small
\begin{split}
&\sum_{\textbf{t}\in \mathcal{T}} q_{\textbf{t}} [V(\textbf{Q}(t)+\textbf{t}) -V(\textbf{Q}(t))] \le 2(\lambda -1)\sum_{i=1}^N Q_i(t)  + (\lambda +1)N.\\
\end{split}
\end{displaymath}
If $\sum_{i=1}^N Q_i(t) > \frac{1+\lambda)N}{2(1-\lambda)}$, then the above sum is less than a negative value. 
By the Foster-Lyapunov theorem (see theorem of book  \cite{srikant2013communication}), we infer that the CTMC is positive recurrent. 
At the equilibrium state, we have 
\begin{displaymath}
\small
\begin{split}
0 &= E\left[\sum_{\textbf{t}\in \mathcal{T}} q_{\textbf{t}} (V(\textbf{Q}(t)+\textbf{t}) -V(\textbf{Q}(t))) \right] \\
&\le 2(\lambda -1)E[\sum_{i=1}^NQ_i(t)]  + (\lambda +1)N, 
\end{split}
\end{displaymath}
which implies that 
\begin{displaymath}
E[\frac{1}{N}\sum_{i=1}^NQ_i(t)] \le \frac{1+\lambda }{2-2\lambda } < \frac{1}{1-\lambda}.
\end{displaymath}
Moreover,  $c$ can be set as $c = \frac{1}{1-\lambda}$.
\end{proof}

\section{Application of Kurtz's Theorem}
For reader's convenience,  we present a formal statement of the Kurtz's theorem as below.
\begin{theorem}
\textbf{[Kurtz]} Suppose we have a density dependent family satisfying the Lipschitz condition
\begin{displaymath}
\mid F(x)  - F(y)\mid \le M\|x-y \|
\end{displaymath} 
form some constant $M$. Further suppose $\lim_{n\to \infty} X(0) = x_0$, and let $x(t)$ be the deterministic process:
\begin{displaymath}
x(t) = x_0 +\int_{0}^t F(x(u) ) du,\  t\ge 0.
\end{displaymath}
Consider the path $\{X(u): u\le t\}$ for some fixed $t \ge 0$, and assume that there exists a neighborhood $K$ round this path satisfying 
\begin{displaymath}
\sum_{\textit{l} \in \textbf{L}} |\textit{l}| \sup_{x\in K}\beta_{\textit{l}}(x) <\infty.
\end{displaymath}
Then 
\begin{displaymath}
\lim_{N\to \infty} \sup_{u\le t} \mid X_N(u) -x(u)\mid = 0 \ a.s.
\end{displaymath}
\end{theorem}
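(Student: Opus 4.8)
The plan is to establish this functional law of large numbers by the classical martingale-plus-Gronwall argument, starting from the Poisson representation already recorded in the main text. Writing $X_N$ for the density-rescaled chain, equation (\ref{s_equation}) expresses it as
\[
X_N(t) = X_N(0) + \sum_{l \in \textbf{L}} \frac{l}{N}\, \tilde{Y}_l\!\left(N\!\int_0^t \tilde{\beta}_l(X_N(u))\,du\right) + \int_0^t F(X_N(u))\,du,
\]
where each $\tilde{Y}_l$ is a centered unit Poisson process and $F(x) = \sum_l l\,\tilde{\beta}_l(x)$. Subtracting the defining relation (\ref{deterministics}) for the deterministic path $x(\cdot)$ and taking norms, I would obtain
\[
|X_N(t) - x(t)| \le |X_N(0) - x_0| + |\Delta_N(t)| + \int_0^t |F(X_N(u)) - F(x(u))|\,du,
\]
with $\Delta_N(t)$ denoting the centered-Poisson remainder. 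The Lipschitz hypothesis bounds the last integrand by $M\,|X_N(u) - x(u)|$, so the whole argument reduces to controlling the two leading terms and then closing the inequality by Gronwall.

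The noise term $\Delta_N$ is the crux. First I would localize: let $\tau_N$ be the first exit time of $X_N$ from the neighborhood $K$, and carry out every estimate on the event $\{u \le \tau_N\}$, where the summability assumption $\sum_l |l| \sup_{x\in K}\beta_l(x) < \infty$ is available. On this event each time-changed argument $N\!\int_0^t \tilde{\beta}_l(X_N)\,du$ stays below $N V_l$ with $V_l = t\sup_{x\in K}\tilde{\beta}_l(x)$, so the functional strong law of large numbers for the Poisson process --- $\sup_{v\le V_l}|N^{-1}\tilde{Y}_l(Nv)| \to 0$ almost surely --- controls each summand. To pass from individual summands to the full (infinite) sum, I would split $\textbf{L} = L_0 \cup L_0^c$ with $L_0$ finite, handle $L_0$ by the Poisson SLLN above, and dominate the tail using $\frac{|l|}{N}|\tilde{Y}_l(Nv)| \le |l|\,(N^{-1}Y_l(NV_l) + V_l)$, whose expectation $2|l|V_l$ is summable over $L_0^c$ by the hypothesis; enlarging $L_0$ then makes the tail uniformly negligible in $N$. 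This yields $\sup_{u\le t\wedge\tau_N}|\Delta_N(u)| \to 0$ almost surely.

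With the noise controlled, setting $\epsilon_N = |X_N(0)-x_0| + \sup_{u\le t\wedge\tau_N}|\Delta_N(u)| \to 0$, Gronwall's inequality applied to $s \mapsto \sup_{v\le s}|X_N(v)-x(v)|$ gives $\sup_{u\le t\wedge\tau_N}|X_N(u)-x(u)| \le \epsilon_N\, e^{Mt}$. It then remains to remove the localization: since $x(\cdot)$ lies in the interior of $K$ throughout $[0,t]$ and the bound forces $X_N$ to shadow $x$ to within $\epsilon_N e^{Mt}$, for all sufficiently large $N$ the chain cannot reach $\partial K$ before time $t$, so $\tau_N > t$ eventually almost surely and the localized estimate becomes the global one. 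The main obstacle is precisely the uniform-in-$N$ control of the infinite sum defining $\Delta_N$: the per-coordinate Poisson SLLN is standard, but interchanging the supremum over time with the infinite summation, and showing the tail is negligible uniformly before the exit time, is exactly where the summability hypothesis must be invoked with care.
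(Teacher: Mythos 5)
Your proposal cannot be compared against an in-paper proof, because the paper does not prove this statement at all: Appendix~D quotes it verbatim as Kurtz's theorem ``for reader's convenience,'' citing \cite{strongApproximation, limit_kurtz}, and the only mathematical work the appendix actually performs is the verification of the hypothesis, namely that the drift of the TTC model satisfies $|F(\textbf{x})-F(\textbf{y})|\le (6\lambda+2)\|x-y\|_1$. What you have reconstructed is the canonical proof of the cited result (Kurtz; Ethier--Kurtz, Ch.~11; Shwartz--Weiss): the Poisson random time-change representation, which indeed coincides with the paper's equation (\ref{s_equation}); the functional strong law $\sup_{v\le V}|N^{-1}\tilde{Y}_l(Nv)|\to 0$ a.s.\ for each unit Poisson process; localization at the exit time from $K$ so that the time-changed arguments are bounded by $NV_l$; a finite/tail split of $\textbf{L}$ using the summability hypothesis; Gronwall; and de-localization using the fact that the compact path $\{x(u):u\le t\}$ has positive distance to the complement of the open neighborhood $K$, so the bound $\epsilon_N e^{Mt}$ eventually forbids exit before time $t$. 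This is the right architecture and each step is standard, so your sketch is correct in substance.

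One step deserves more care than your sketch gives it, and you correctly flag it yourself: the almost-sure interchange of $\limsup_N$ with the infinite sum over the tail $L_0^c$. Your domination $\frac{|l|}{N}|\tilde{Y}_l(Nv)|\le |l|\left(N^{-1}Y_l(NV_l)+V_l\right)$ has summable \emph{expectation} $2|l|V_l$, but summable expectation by itself only yields control in probability, not the a.s.\ conclusion stated in the theorem. The standard repair is to note that $N^{-1}Y_l(NV_l)\to V_l$ a.s.\ for each $l$ and invoke a generalized dominated convergence (Pratt-type) argument for series whose dominating terms converge a.s.\ with convergent sums, or alternatively a Borel--Cantelli estimate on the tail sum. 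Since the fix is routine and you identified the exact location where the hypothesis $\sum_{l\in\textbf{L}}|l|\sup_{x\in K}\beta_l(x)<\infty$ must do its work, I would call this a presentational rather than a mathematical gap; as a proof of the theorem the paper merely imports, your argument is more than the paper itself supplies.
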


In what follows, we show that $F(s)$ in our model is Lipschitz.
Recall that
\begin{equation}
\begin{split}
F(s) = \sum_{i=1}^\infty \left( \lambda p_{i-1}(s_{i-T-1}+ s_{i+T}) -p_i\right).
\end{split}
\end{equation}
Let $\textbf{x}$ and $\textbf{y}$ be two states, and let $p^x_{i-1} = x_{i-1}- x_i$ and $p^y_{i-1} = y_{i-1} - y_i$. Then, we have
\begin{equation}
\small
\nonumber
\begin{split}
&|F(\textbf{x}) -F(\textbf{y}) | =\mid  \sum_{i=1}^\infty \{ \lambda p^x_{i-1}(x_{i-1-T} + x_{i+T}) - (x_i-x_{i+1})\}\\
- & \sum_{i=1}^\infty \{ \lambda p^y_{i-1}(y_{i-1-T} + y_{i+T}) - (y_i-y_{i+1})  \mid \\
& \le \mid  \sum_{i=1}^\infty \{ \lambda p^x_{i-1}(x_{i-1-T} + x_{i+T} -y_{i-1-T} -y_{i+T})\\
+& \lambda (p^x_{i-1} -p^y_{i-1})(y_{i-1-T} + y_{i+T}) \}  \mid + \mid \sum_{i=1}^\infty(p^x_i -p^y_i)\mid   \\
&\le 2\lambda \sum_{i=0}^\infty |x_i -y_i|  + 2\lambda  \sum_i^{\infty} |x_{i-1} -x_{i } -y_{i-1} + y_{i}|  + 2 \mid \sum_{i=1}^\infty(x_i -y_i)\mid \\
& \le ( 6\lambda +2) \sum_{i=0}^{\infty} |x_i - y_i|  = ( 6\lambda +2)\|x- y\|_1.
\end{split}
\end{equation}


\end{document}